\renewcommand{\epsilon}{\varepsilon}
\newcommand{\eps}{\varepsilon}
\newcommand{\rmd}{\mathrm{d}}
\theoremstyle{plain}
\newtheorem{theorem}{Theorem}[section]
\newtheorem{proposition}[theorem]{Proposition}
\newtheorem{lemma}[theorem]{Lemma}
\theoremstyle{definition}
\theoremstyle{remark}
\numberwithin{equation}{section}
\numberwithin{theorem}{section}
\title[ Concentrated vortex rings]{Time evolution of vortex rings with large radius
and very concentrated vorticity}
\author{Guido Cavallaro and Carlo Marchioro}
\begin{document}

\begin{abstract} 
We study the time evolution of an incompressible fluid with axial symmetry without swirl when the vorticity is sharply concentrated on $N$ annuli of radii $\approx$ $r_0$ and thickness $\epsilon$. We prove that when $r_0= |\log \epsilon|^\alpha, \,\, \alpha>2$, the vorticity field of the fluid converges as $\epsilon \to 0$ to the point vortex model, at least for a small but positive time.
This result generalizes a previous paper that assumed a power law for the relation between $r_0$ and $\epsilon$.
\end{abstract}

\keywords{Time evolution of vortex rings, smoke rings,
point vortex model.}

\subjclass[2020]{ 76B47; 37N10;  76M23.}

\maketitle

\thispagestyle{empty}

\section{Introduction} \label{sec1}

In the present paper we study the motion of an incompressible inviscid fluid with an axial symmetry  without swirl (for the exact definition see later on) when the initial vorticity is very concentrated on $N$ annuli of radii $\approx$ $r_0$ and thickness $\epsilon$. We prove the relation of this motion with the so-called point vortex system in the plane when $r_0\to \infty$ as $\epsilon \to 0$.

\medskip

\noindent The motion of an incompressible inviscid fluid is governed by the Euler equations, that for a fluid of unitary density in three dimensions read:

\begin{equation}
\label{1.1}
(\partial_t + (u\cdot \nabla))\omega=(\omega\cdot \nabla )u \, ,
\end{equation}
\begin{equation}
\label{1.2}
\nabla \cdot u =0   \qquad {\textnormal{(continuity  equation)}} ,
\end{equation}
$u(x,0)=u_0(x)$ (velocity field),  and boundary conditions. From now on we suppose that the velocity vanishes as $|x| \to \infty$. This assumption allows to reconstruct the velocity from the vorticity:
\begin{equation}
\label{1.3}
u(x,t)= - \frac{1}{4\pi} \int  \frac{x-y }{|x-y|^3} \wedge \omega(y,t) \, dy \, .
\end{equation}

\noindent We use now cylindrical coordinates $(z,r,\theta)$ and  suppose that the initial velocity field has the form (axial symmetry without swirl):
\begin{equation}
\label{1.4}
u(x,t)= (u_z,u_r,u_\theta)= (u_z(z,r,t), u_r(z,r,t), 0) \, .
\end{equation}
The time evolution conserves this symmetry. In this case the  vorticity is 
\begin{equation}
\label{1.5}
\omega= \nabla \wedge u = (0,0,\omega_{\theta}) = (0,0, \partial_zu_r - \partial_r u_z) \, ,
\end{equation}
and the  Euler equations become
\begin{equation}
\label{1.6}
(\partial_t  +(u_z\partial_z + u_r\partial_r)) \omega_{\theta} -\frac{u_r \omega_{\theta}}{r}  =0  \, ,
\end{equation}

\begin{equation}
\label{1.7}
\partial_z(r \ u_z)+\partial_r(r \ u_r)=0 \, .
\end{equation}

\noindent From now on we denote $\omega_{\theta}$ by $\omega$.
Finally, by (\ref{1.3}),
\begin{equation}
\label{1.8}
\begin{aligned}
u_z(z,r,t)= - \frac{1}{2\pi} & \int _{-\infty}^\infty dz'  \int_0^\infty r' \, dr'  \\
&\int _0^\pi d\theta \, \frac{\omega(z',r',t) [r \cos \theta - r']}{[(z-z')^2+(r-r')^2 + 2rr'(1-\cos \theta)]^{3/2}} \, ,
\end{aligned}
\end{equation}

\begin{equation}
\label{1.9}
\begin{aligned}
u_r(z,r,t)= \frac{1}{2\pi} & \int _{-\infty}^\infty dz'  \int_0^\infty r' \, dr'\\
 &\int _0^\pi d\theta\, \frac{\omega(z',r',t) [z - z']\cos \theta}{[(z-z')^2+(r-r')^2 + 2rr'(1-\cos \theta)]^{3/2}}  \, .
\end{aligned}
\end{equation}

Hence, the axially symmetric solutions to the Euler equations are given by the solutions to eqs. \eqref{1.5}-\eqref{1.9}.
Eq. (\ref{1.6}) means that the quantity $\omega/r$ remains constant along the flow generated by the velocity field, i.e.
\begin{equation}
\label{1.10}
\frac{\omega(z(t),r(t),t)}{r(t)}= \frac{\omega(z(0),r(0),t)}{r(0)} \, ,
\end{equation}
where $(z(t),r(t))$ solve
\begin{equation}
\label{1.11}
\dot{z}(t)=u_z(z(t),r(t),t) \, ,  \qquad  \dot{r}(t)=u_r(z(t),r(t),t) \, .
\end{equation}
It is possible to introduce an equivalent weak formulation of \eqref{1.6} that allows to consider non-smooth initial data;
by a formal integration by parts we obtain indeed
\begin{equation}
\label{1.12}
\frac{\mathrm{d}}{\mathrm{d}t} \omega_t(f)=\omega_t \big[u_z\partial_z f + u_r\partial_r f + \partial_tf \big] \, ,
\end{equation}
where $f=f(z,r,t)$ is a bounded smooth test function and
\begin{equation}
\label{1.13}
\omega_t (f):=\int_{-\infty}^\infty dz\int_0^\infty dr \ \omega (z,r,t) f(z,r,t) \, .
\end{equation}
It is known that a global (in time) existence and uniqueness of a weak solution to the associate Cauchy problem holds when the initial vorticity is a bounded function with compact support contained in the open half-plane  $\Pi:=\lbrace(z,r):  r>0\rbrace$, see  for instance  \cite{mpulv} pag.91 or the Appendix of \cite{CS}.  In particular, it can be shown that the support of the vorticity remains in the open half-plane $\Pi$ at any time.
A point  in the half-plane $\Pi$ denotes a circumference in the whole space.
The special class of axisymmetric without swirl solutions are called sometimes {\textit{smoke  rings}}, because there exist particular solutions whose shape remains constant in time (the so-called steady vortex ring) and translate in the $z$-direction with constant speed  (see for instance \cite{Fra70}). The existence and the properties of these solutions is an old question. For a rigorous proof by means of variational methods see  \cite{AmS89,FrB74}. For references on axially symmetric solution without swirl see also the review paper \cite{ShL92}.

\bigskip

Denote $x=(x_1, x_2) := (z, r-r_0)$.
 We assume that initially the vorticity is concentrated in $N$ blobs of the form
\begin{equation}
\omega_\epsilon(x,0)= \sum_{i=1}^N \omega_{i,\epsilon}(x,0)
\label{in_data}
\end{equation}
where $\omega_{i,\epsilon}(x,0)$ are functions with a definite sign such that, denoting by $\Sigma(\xi| \rho)$ the open disk of center $\xi$ and radius $\rho$ in $\mathbb{R}^2$,
\begin{equation}
\Lambda_{i,\epsilon}:= \text{supp} \, \omega_{i,\epsilon}(\cdot,0) \subset \Sigma(z_i | \epsilon) 
\; ; \qquad  \Sigma(z_i |\epsilon) \cap  \Sigma(z_j |\epsilon) =\emptyset \ \ \forall i \ne j \, 
\label{in_data2}
\end{equation}
being $\epsilon>0$
a small parameter and 
$z_1, \dots, z_N$, points contained in a bounded region of $\mathbb{R}^2$ such that
$$
\min_{i\neq j} |z_i -z_j| > \rho_m
$$
for a positive constant $\rho_m$ independent of $\epsilon$.
Moreover we assume that, for any $i=1, \dots, N$,
\begin{equation}
\int  dx \ \omega_{i,\epsilon}(x,0) := a_i \in \mathbb{R}
\end{equation}
independent of $\epsilon$ and 
\begin{equation}\label{mass}
|\omega_{i,\epsilon}(x,0)| \leq M \epsilon^{- \gamma} \; , \qquad M>0 \; , \quad \gamma >0 \, .
\end{equation}

\noindent We will discuss if,  in some cases and for small $\epsilon$,  the
time evolution of these states has the same form. In
Theorem \ref{teoEA}  we will prove that, as the assumption $r_0=|\log
\epsilon|^\alpha$   ($\alpha > 2$)  is fulfilled, the evolved state
$\omega_\epsilon(x,t)$ can be written as
\begin{equation}
\omega_\epsilon(x,t)= \sum_{i=1}^N \omega_{i,\epsilon}(x,t) \, ,
\label{t_data}
\end{equation}
where $\omega_{i,\epsilon}(x,t)$ are functions with definite sign such that
\begin{equation}
\Lambda_{i,\epsilon}(t):= \text{supp} \, \omega_{i,\epsilon}(\cdot, t) \subset \Sigma(z_i(t) | r_t(\epsilon)) \, ,
\end{equation}
with
\begin{equation}
 \Sigma(z_i(t) | r_t(\epsilon)) \cap  \Sigma(z_j(t) | r_t(\epsilon)) = \emptyset   \qquad \forall\, i \ne j \, ,
\end{equation}
being $r_t(\epsilon)$ a positive function, vanishing for  $\epsilon\rightarrow 0$, and
$z_i(t)\in \mathbb{R}^2$ solution to
the point-vortex model, that  is the dynamical system defined by the following differential equations:
\begin{equation} \label{pvmodel}
\dot{z}_i(t)= \sum_{\substack{j=1 \\j \neq i}}^N a_j K(z_i(t)- z_j(t)) \, , \qquad z_i(0)=z_i
\end{equation}
\noindent for $i=1, \ldots, N$,  and 
\begin{equation}\label{nucleoK}
K(x)=-\frac{1}{2\pi}\nabla^{\perp}\log|x|, \qquad \nabla^{\perp}=(\partial_2,-\partial_1),
\end{equation}

\noindent where $-1/2\pi\log|x|$ is the fundamental solution of the Laplace operator in $\mathbb{R}^2$. 
When all the $a_i$ have the same sign there is a global solution, otherwise there is a finite time
at which
a collapse (that is two $z_i$ arriving at the same point) or a $z_i$ going to infinity can
occur. 
However  (see for instance \cite{mpulv}) these events are exceptional.

\noindent Few words on this dynamical system: it has been introduced by Helmholtz as particular {\textit{solution}} of the Euler equations  \cite {Hel67} and investigated by many authors
\cite{Kel,Kir,Poi}.
It has been used to investigate the time evolution of irregular initial data and it produces an approximation method (called vortex method) in which $N \to \infty$ and $a_i\to 0$ (for more information see for instance the textbook  \cite{mpulv} or \cite{MaB02} and references in \cite{CGP14}).

Even if solutions of \eqref{pvmodel} cannot be a solution of the Euler equations, they can be an average of different solutions that in $\mathbb{R}^3$ are clusters of straight lines of vorticity, as it is discussed in \cite{CapMar, Mar88,  Mar,  MaP, MaP93, Turk}.  In \cite{marc} it is shown that the same happens when the straight lines are changed into large enough annuli,
with radius of the order $r_0(\epsilon)= \epsilon^{-\alpha}$,   for any $\alpha>0$ and for any finite time, and recently for long times in \cite{CS}.

In the present paper we assume a weaker dependence of $r_0$ on $\epsilon$ and we will show that the relation with the point vortex model remains valid at least for a finite but positive time.
We consider radii of the order $|\log\epsilon|^\alpha$, $\alpha>2$ (this lower bound on $\alpha$ 
  appears for a technical
reason which occurs in eq.  \eqref{qui}),
hence the rings are less distant from the axis, where the effects of curvature become stronger. 
For laws $r_0=f(\epsilon)>\epsilon^{-\alpha}$ the convergence to the point-vortex model
happens faster, while for the previous logarithmic law  the convergence is slower.
 This imposes a more careful strategy to prove the convergence, since we need an iterative method  as the one used in \cite{butmar2}
in a different context, which produces a result only for bounded times. This reflects the 
difficulty of the convergence when the distance from the axis approaches the scale
$|\log\epsilon|^\alpha$, $0\leq \alpha \leq 1$, for which only for $\alpha=0$ 
(for one vortex alone \cite{ben} and $N$ vortices \cite{butmar2}) and for $\alpha=1$  
for one vortex alone \cite{marneg} some results are available. 
For $\alpha=0$ the dynamics of the vortices converges to
simple translations parallel to the symmetry
axis with constant speed.
For $\alpha = 1$
we can conjecture
that  the convergence
of the dynamics is not to the point-vortex model, but to the dynamical
system defined by
\begin{equation}
\dot{z}_i(t)= \sum_{\substack{j=1 \\j \neq i}}^N a_j K(z_i(t)- z_j(t))
+ e_1 a_i \, , \qquad z_i(0)=z_i \, , \quad e_1=(1,0) \, .
\label{dyn2}
\end{equation}
  This is proved rigorously for one vortex alone in \cite{marneg}, but for
	$N$ vortices it is an open problem and the correspondence
	with \eqref{dyn2}  is established only at a heuristic level.

We also mention that in literature it is discussed how the point-vortex model behaves under a viscosity perturbation \cite{BrM11,CS, Gal11,Lad68, Mar90, marcNS, Mar07, UkY68}, but this topic is out of the scope of the present analysis.

\section{Main result} \label{sect2}

{\textit{A warning on the notation}}. Hereafter  in the paper
we denote by $C$
a generic positive constant (eventually changing from line to line) which is independent of the parameter $\epsilon$
and the time $t$.

We define a suitable scaling of variables, in order to get the convergence to the point-vortex dynamics, in such a way that  the rings increase their radius while their support becomes smaller. When the radius
increases, the interaction of the $N$ vortices with the axis becomes negligible in the limit $\epsilon\to 0$, which permits
to obtain a convergence to the point-vortex dynamics. Denoting by $(r, z, \theta)$  the cylindrical coordinates in $\mathbb{R}^3$, we recall the previously mentioned coordinates
\begin{equation} \label{coordinate} x=(x_1, x_2) := (z, r-r_0) \,  \end{equation}
and consider an initial vorticity as specified in \eqref{in_data}-\eqref{mass},
whose evolution at time $t$ can be expressed as in \eqref{t_data}.
Knowing the velocity field $u(\cdot,t):=u_t(\cdot)$, we can define the trajectory of a fluid element starting at $x$ as the solution of the integral equation
$$
\phi_t(x) = x + \int_0^t u_s (\phi_s(x)) ds \, .
$$
Since the quantity $\omega/r$ remains constant along the 
flow generated by the velocity field, we have
\begin{equation}\label{term i EA}
 \omega_{i, \epsilon}(x,t) :=  \frac{r_0+x_2}{r_0+(\phi_{-t}(x))_2} \omega_{i, \epsilon}(\phi_{-t}(x),0)  \, .
 \end{equation}

\noindent Moreover $\omega_{i,\varepsilon}(x,t)$ preserves the initial sign and the total mass $a_i$, as immediately follows from the definitions (see also Lemma \ref{norme di omega}).

\noindent Furthermore, for each index $i$, we can decompose the velocity field $u$ as follows:
\[u(x,t) =u^i(x,t) + F_\epsilon^i(x,t) \, , \]

\noindent where
$$
u^i(x,t)=\int dy\,G(x,y)\,\omega_{i,\varepsilon}(y,t)
$$ 
 
\noindent is the velocity field generated by the vortex $\omega_{i, \epsilon}$, and 
\begin{equation}
F_\epsilon^i(x,t)=\sum\limits_{j\neq i}\int dy\,G(x,y)\,\omega_{j,\varepsilon}(y,t)
\label{effe_iep}
\end{equation}

\noindent is the one generated by the remaining $N-1$ vortices; here $G(x,y)$ denotes the integral kernel appearing in \eqref{1.8}-\eqref{1.9} in the new coordinates \eqref{coordinate}.

Let us call $\{z_i(t) \}_{i=1, \ldots, N}$ the solution to point vortex dynamics \eqref{pvmodel} with intensities $a_i$ and initial data $z_i$ (with $z_i\neq z_j$  for  $i\neq j$). 
As already discussed, such dynamics is well defined globally in time apart from a zero
measure set of initial data. Even in this last case in which a collapse can occur,
since our results hold for times smaller than a positive constant, we can consider such
constant (let's call it $T_*$) much smaller than the first collapse time. With this viewpoint we define,
for a constant $\bar R>0$,
\begin{equation}
T_\omega := \sup \{ t>0 : \: \mathrm{supp}\,\omega_{i, \epsilon}(s) \subseteq \Sigma(z_i(s)| \bar R) \; \; \forall i=1, \ldots, N \,, \; \forall s \in [0,t]  \}  \, ,
\end{equation}
\begin{equation}
\bar T = \min \{ T_\omega,  T_*    \} \, ,
\end{equation}
and
\begin{equation}
 R_m := \min_{i \neq j} \inf_{t \in [0, \bar T)} |z_i(t) -z_j(t)| >0 \, . 
\end{equation}
We ask 
\begin{equation}
\bar R < R_m/4
\label{R_m}
\end{equation}
(it  will be used in the sequel). Observe that such requirement
is non-empty: for $\bar R=\epsilon$ it results $\bar T =0$ (by the initial data 
\eqref{in_data}-\eqref{in_data2}), and \eqref{R_m} is obviously fulfilled (for small $\epsilon$).
Considering then a small positive $\bar R$ (but independent of $\epsilon$), we obtain
consequently a small $\bar T$, and
 for a continuity
argument \eqref{R_m} can still be satisfied. In the next Theorem we state a better result,
the size of the support of $\omega_i$ at time $t$ (for short times) is a quantity which vanishes
for $\epsilon\to 0$.
The result is the following.
\begin{theorem} \label{teoEA}
Consider initial vorticity as in \eqref{in_data}-\eqref{mass} and $r_0 = |\log\epsilon|^{\alpha}$, for any $\alpha>2$.      
Then there is a $T>0$ such that
$$
\mathrm{supp}\,\omega_{i, \epsilon}(s) \subseteq \Sigma\left(z_i(s)  | C_T |\log\epsilon|^{-k}\right) \; \; \forall i=1, \ldots, N \,, \; \forall s \in [0,T]
$$
where $C_T$ is a positive constant,  $k = (\alpha-2)/2$,    $\epsilon \in(0, \epsilon_0)$, with 
$\epsilon_0<1$ solution to
$$
C_T |\log\epsilon_0|^{-k}=R_m/4 \, .
$$
\end{theorem}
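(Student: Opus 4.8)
\medskip
\noindent\textbf{Proof plan.}

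The strategy is a continuity (bootstrap) argument on the time $T_\omega$, combined with an iteration on the confinement radius of the type used in \cite{butmar2}: one works on an interval on which $\mathrm{supp}\,\omega_{i,\epsilon}(s)\subseteq\Sigma(z_i(s)\,|\,\bar R)$, derives a strictly smaller confinement radius, and propagates it by continuity up to a fixed time $T$. As long as $s\le\bar T$, Lemma \ref{norme di omega} gives $\|\omega_{i,\epsilon}(s)\|_{L^1}=|a_i|$, and by \eqref{mass} together with the bound $(r_0+x_2)/(r_0+(\phi_{-s}(x))_2)=1+O(\bar R/r_0)$ in \eqref{term i EA} also $\|\omega_{i,\epsilon}(s)\|_{L^\infty}\le C\epsilon^{-\gamma}$; moreover the point vortices $z_j(s)$ stay at mutual distance $\ge R_m$. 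A preliminary, purely computational step is to expand the axisymmetric kernel $G$ of \eqref{1.8}--\eqref{1.9} (in the coordinates \eqref{coordinate}) about the planar kernel $K$ of \eqref{nucleoK}: on a bounded region one has $G(x,y)=K(x-y)+\mathcal{R}(x,y)$ and $G(y,x)=-K(x-y)+\mathcal{R}'(x,y)$, with $|\mathcal{R}|,|\mathcal{R}'|\le C r_0^{-1}\big(1+\big|\log|x-y|\big|\big)$ (the logarithm being, for $|x-y|\sim\epsilon$, the familiar $\log(r_0/\epsilon)$ correction of vortex-ring theory).

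Next I would follow the centre of vorticity $B_i(t):=a_i^{-1}\int x\,\omega_{i,\epsilon}(x,t)\,\mathrm{d}x$, writing (up to an $O(r_0^{-1})$ correction coming from the transport of $\omega/r$ rather than $\omega$) $\dot B_i=a_i^{-1}\big(\int u^i\omega_{i,\epsilon}+\int F^i_\epsilon\omega_{i,\epsilon}\big)$, cf.\ \eqref{effe_iep}. Symmetrising the self part in $x\leftrightarrow y$ annihilates the planar term $\iint K(x-y)\omega_{i,\epsilon}(x)\omega_{i,\epsilon}(y)=0$ by antisymmetry of $K$, leaving only $\iint\mathcal R$, which the kernel estimate and the $L^1$--$L^\infty$ bounds control by $O(r_0^{-1}|\log\epsilon|)$; the mutual part equals $\sum_{j\ne i}a_jK(B_i-z_j)$ up to $O(\bar R)+O(r_0^{-1})$ because $K$ is smooth and Lipschitz at distances $\ge R_m-2\bar R$. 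Subtracting \eqref{pvmodel} and using Gr\"onwall with the local Lipschitz constant of $K$ away from collisions yields $|B_i(t)-z_i(t)|\le C_T\,r_0^{-1}|\log\epsilon|=C_T|\log\epsilon|^{1-\alpha}$ on $[0,\bar T]$, much below the target radius.

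The heart of the proof is the spread about the centre, $I_i(t):=\int|x-B_i(t)|^2\,|\omega_{i,\epsilon}(x,t)|\,\mathrm{d}x$. Since $|\omega_{i,\epsilon}|$ is transported up to $O(r_0^{-1})$, $\dot I_i=2\int(x-B_i)\cdot(u_t-\dot B_i)|\omega_{i,\epsilon}|\,\mathrm{d}x$, and the $\dot B_i$ term drops because $\int(x-B_i)\omega_{i,\epsilon}=0$. In the self part one uses the exact orthogonality $(x-y)\cdot K(x-y)=0$: symmetrising in $x\leftrightarrow y$ removes the \emph{a priori dangerous} planar self-velocity, whose size is only $O(\epsilon^{-\gamma/2})$, and leaves $\iint(x-y)\cdot\mathcal R(x,y)|\omega_{i,\epsilon}(x)||\omega_{i,\epsilon}(y)|$, bounded by $C\,r_0^{-1}|\log\epsilon|^{2}$ via the kernel estimate (one $|\log\epsilon|$ from $\mathcal R$, one from $\int\big|\log|x-y|\big|\,|\omega_{i,\epsilon}(y)|\,\mathrm{d}y$). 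The contribution of $F^i_\epsilon-\sum_j a_jK(\cdot-z_j)$ and of the mismatch with $\dot z_i$ is $O(I_i)+O(R\sqrt{I_i})$, with $R$ the current confinement radius. Thus $\dot I_i\le C\big(I_i+R\sqrt{I_i}+r_0^{-1}|\log\epsilon|^{2}\big)$, and Gr\"onwall from $I_i(0)\le C\epsilon^2$ gives, on $[0,T]$, $I_i(t)\le C_T\,r_0^{-1}|\log\epsilon|^{2}=C_T|\log\epsilon|^{2-\alpha}$; it is exactly the demand that this be $o(1)$ --- i.e.\ that the $|\log\epsilon|^{2}$ loss be beaten by $r_0=|\log\epsilon|^{\alpha}$ --- that forces $\alpha>2$.

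Finally I would upgrade "bulk concentration" to confinement of the whole support by an iteration: given $\mathrm{supp}\,\omega_{i,\epsilon}(s)\subseteq\Sigma(B_i(s)\,|\,R_n)$ on $[0,T]$ one re-runs the previous step to get $I_i\le\mathcal I_n$; then, for $\rho\gg\sqrt{\mathcal I_n}$, the velocity induced by $\omega_{i,\epsilon}$ at distance $\rho$ from $B_i$ is controlled by splitting the Biot--Savart integral at $\rho/4$ (using $\|\omega_{i,\epsilon}\|_\infty$ on the near part and the Chebyshev bound $\int_{|y-B_i|>\rho/2}|\omega_{i,\epsilon}|\le 4\mathcal I_n/\rho^2$ on the far part), so that testing the weak formulation \eqref{1.12} against a radial cutoff supported on $\{|x-B_i|>\rho\}$ shows that essentially no mass leaves $\Sigma(B_i(s)\,|\,R_{n+1})$ with $R_{n+1}<R_n$; as the loss per step is a fixed factor, the iteration converges --- and only on a bounded time interval --- to the scale $R_\infty=C_T|\log\epsilon|^{-k}$, $k=(\alpha-2)/2$. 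Together with $|B_i-z_i|\ll|\log\epsilon|^{-k}$ from the second step this gives $\mathrm{supp}\,\omega_{i,\epsilon}(s)\subseteq\Sigma(z_i(s)\,|\,C_T|\log\epsilon|^{-k})$; since the improved radius is an admissible $\bar R$ precisely when $C_T|\log\epsilon|^{-k}<R_m/4$, i.e.\ for $\epsilon\in(0,\epsilon_0)$ as in the statement, the bootstrap closes. The main obstacle is the self-interaction term in $\dot I_i$: its naive size is a positive power of $\epsilon^{-1}$, and only the exact cancellation $(x-y)\cdot K(x-y)=0$ reduces it to the curvature remainder $\mathcal R$, whose unavoidable $|\log\epsilon|$ factors make $\alpha=2$ the break-even exponent; keeping the iteration of the last step convergent, which it is only for bounded times, is the second delicate point.
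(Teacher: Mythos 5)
Your architecture is the paper's: compare the axisymmetric kernel $G$ with the planar $K$, track the centre of vorticity $B^i_\epsilon$ against $z_i$, bound the moment of inertia $I^i_\epsilon$, and control the mass far from the centre by an iteration that only closes for bounded times. However, two of the quantitative steps, as you state them, do not deliver the claimed radius $|\log\epsilon|^{-k}$ with $k=(\alpha-2)/2$.

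First, your moment-of-inertia bound is too weak. The curvature remainder enters $\dot I^i_\epsilon$ multiplied by the weight $x-B^i_\epsilon(t)$, so after Cauchy--Schwarz it contributes $C|\log\epsilon|^{-(\alpha-1)}\sqrt{I^i_\epsilon}$, and Gr\"onwall gives $I^i_\epsilon(t)\le C|\log\epsilon|^{-2(\alpha-1)}$ (Lemma \ref{lem:Ieps Beps}). By discarding that weight you only get the additive constant $r_0^{-1}|\log\epsilon|^{2}$ and hence $I^i_\epsilon\le C|\log\epsilon|^{2-\alpha}$, which is worse by a factor $|\log\epsilon|^{\alpha}$. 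This is not cosmetic: the iteration on the mass function requires $I_\epsilon/(Rh^3)\le C|\log\epsilon|$ with $R\sim|\log\epsilon|^{-k}$ and $h\sim|\log\epsilon|^{-k-1}$, i.e.\ $4k+2\le 2(\alpha-1)$ --- this is precisely \eqref{qui}, and it is where $k=(\alpha-2)/2$ and the hypothesis $\alpha>2$ come from, not from asking $I^i_\epsilon=o(1)$. With your bound one could only take $k\le(\alpha-4)/4$ and would need $\alpha>4$. Relatedly, your differential inequality carries a term $O(R\sqrt{I_i})$ with $R$ the current confinement radius; with $R=\bar R=O(1)$ this alone forces $I_i=O(\bar R^2)$ after time $O(1)$, so your Gr\"onwall conclusion does not follow from your own inequality unless the radius has already been made small --- whereas in the paper the external-field term is exactly $2LI_\epsilon+C|\log\epsilon|^{-(\alpha-1)}\sqrt{I_\epsilon}$ because the constant $F_{\epsilon,1}(B_\epsilon(t),t)$ integrates to zero against $(x-B_\epsilon(t))\,\omega_\epsilon$, with no dependence on the confinement radius.

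Second, the far-field mass. In the growth estimate for the support (Lemma \ref{variazionemax}) the dangerous term is $\sqrt{\epsilon^{-\gamma}m_\epsilon(R_t/2,t)}$, so you must show $m_\epsilon(|\log\epsilon|^{-k},t)=o(\epsilon^{\ell})$ for every $\ell>\gamma$. The Chebyshev bound $m_\epsilon\le 4I_\epsilon/\rho^2$ you invoke is only polynomially small in $|\log\epsilon|$ and is annihilated by the factor $\epsilon^{-\gamma}$. The paper's Lemma \ref{lem:mt} obtains the superpolynomial smallness by iterating a mollified-mass inequality $n=\lfloor|\log\epsilon|\rfloor$ times over radii decreasing from $|\log\epsilon|^{-k}$ to $\tfrac12|\log\epsilon|^{-k}$ (step $h\sim 1/(n|\log\epsilon|^{k})$), yielding $m_\epsilon\le(Ct)^{\lfloor|\log\epsilon|\rfloor}$; this is simultaneously the source of the restriction to times $T<1/C$ and of the constraint on $k$ discussed above. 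Your sketch of the last step ("the loss per step is a fixed factor, the iteration converges") does not make this mechanism explicit, and without it the bootstrap does not close.
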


\bigskip

\bigskip

\noindent We remark that with the previous definition of $\epsilon_0$ it results $C_T |\log\epsilon|^{-k}<R_m/4$.

\section{Proof of Theorem \ref{teoEA}}
We give the general strategy of the proof, which is rather technical and composed of 
many auxiliary Lemmas and Propositions.
We study the motion of a tagged vortex (with index $i$) under the influence of the remaining $N-1$. The field generated by the remaining $N-1$ vortices has the features of
a given external bounded field, since in the time interval $[0, \bar T]$ 
the minimum distance between any two distinct vortices
remains greater than a positive constant. We make use then of a fundamental estimate on the growth in time of the moment
of inertia of the tagged vortex, and we estimate the vorticity mass far from the center
of vorticity, showing that it is negligible when $\epsilon$ is small, by means of an
iterative method. Putting together
these (and other technical) results we achieve the proof.
Some of these tools are similar to those of 
previous papers \cite{butmar, butmar2, CapMar, CS, ISG, marc},  and  we
write them again for completeness.

\noindent We discuss then the preliminary results we need, starting with
the estimate of the convolution kernel $G$ (in the new coordinates \eqref{coordinate}), showing that, under suitable assumptions, this is near to $K$ (of the planar case).

\noindent 
Making use of
\eqref{1.8}-\eqref{1.9}, written with respect to the new coordinate system \eqref{coordinate},
 we get
\[ u(x,t)= \int_{\mathbb{R}^2}  G(x, y) \, \omega_\epsilon(y, t) \, dy \]
where the convolution kernel $G(x,y)$ is defined by:     
\begin{equation}
G_1(x,y) =  \frac{1}{2\pi} \int_0^{\pi} d\theta \, 
\frac{ (r_0+y_2) \big[(r_0+y_2) - (r_0+x_2) \cos \theta \, \big]}
{ \left\{ |x-y|^2 +2(r_0+x_2)(r_0+y_2)(1-\cos \theta) \right\}^{3/2}}
\end{equation}
\begin{equation}
G_2(x,y) =  \frac{1}{2\pi} \int_0^{\pi} d\theta \, \frac{(r_0+ y_2)(x_1-y_1) \cos \theta}{ \left\{|x-y|^2 +2(r_0+x_2)(r_0+y_2)(1-\cos \theta) \right\}^{3/2}} \, .
\end{equation}
We now want to give an estimate for this convolution kernel, in particular we want to show that, for small enough $\epsilon$, the vector field $u$ is near to the vector field
$\widetilde{u}$ corresponding to the planar case, namely
\[ \widetilde{u}(x, t) = \int_{\mathbb{R}^2} K(x-y) \, \omega_\epsilon(y, t) \, dy \,  \]
with $K$ defined in \eqref{nucleoK}.
We need the following lemma, whose proof is contained in \cite{CS}
and reported here for  completeness.

\begin{lemma}
Let us define, for $a>0$:
\[ I_1(a) := \int_0^\pi d\theta \frac{\cos\theta}{[a^2+2(1-\cos\theta)]^{3/2}} \, , \]
\[ I_2(a):= \int_0^\pi d\theta \frac{1 - \cos\theta}{[a^2+2(1-\cos\theta)]^{3/2}}  \, . \]
Denoting by $\chi_{(0,1)}(\cdot)$ the characteristic function of the interval $(0,1)$, the following equalities hold:
\begin{equation} 
I_1 (a) = a^{-2} + R_1(a) \, ,\qquad I_2 (a) = -\frac{1}{2} \log a \cdot \chi_{(0,1)}(a) + R_2(a) \, , 
\label{separ} 
\end{equation} 
where $a \cdot R_1(a)$ is bounded and $|R_2(a)| \leq C \min(1, \frac{1}{a})$.
\end{lemma}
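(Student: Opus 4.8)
The plan is to reduce both integrals to a single elementary model by a half-angle substitution and then to isolate the contribution of the singular point $\theta=0$. Setting $s=2\sin(\theta/2)$ we have $2(1-\cos\theta)=s^2$, $\cos\theta=1-s^2/2$ and $\mathrm{d}\theta=\mathrm{d}s/\sqrt{1-s^2/4}$, hence
\[
I_1(a)=\int_0^2\frac{1-s^2/2}{(a^2+s^2)^{3/2}}\,\frac{\mathrm{d}s}{\sqrt{1-s^2/4}}\,,\qquad
I_2(a)=\int_0^2\frac{s^2/2}{(a^2+s^2)^{3/2}}\,\frac{\mathrm{d}s}{\sqrt{1-s^2/4}}\,.
\]
The denominator is now exactly $(a^2+s^2)^{3/2}$, so all the $a\to0$ singular behaviour comes from a neighbourhood of $s=0$.

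Fix a small $\delta\in(0,1)$ and split each integral as $\int_0^\delta+\int_\delta^2$. On $[\delta,2]$ the denominator is $\ge\max(\delta^3,a^3)$, so that piece is $O(\min(1,a^{-3}))$, which is admissible (note $a\min(1,a^{-3})$ is bounded and $\min(1,a^{-3})\le\min(1,a^{-1})$). On $[0,\delta]$ I would expand $(1-s^2/4)^{-1/2}=1+O(s^2)$, reducing matters to the two explicit integrals
\[
\int_0^\delta\frac{\mathrm{d}s}{(a^2+s^2)^{3/2}}=\frac{\delta}{a^2\sqrt{a^2+\delta^2}}=a^{-2}+O(1)\ \ (a\to0)\,,\qquad
\int_0^\delta\frac{s^2\,\mathrm{d}s}{(a^2+s^2)^{3/2}}=\log\frac{\sqrt{a^2+\delta^2}+\delta}{a}-\frac{\delta}{\sqrt{a^2+\delta^2}}\,,
\]
the second of which is $-\log a+O(1)$ as $a\to0$ and $O(a^{-1})$ for $a\ge1$, plus a tail $\int_0^\delta s^4(a^2+s^2)^{-3/2}\mathrm{d}s=O(1)$.

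For $I_1$, the first of these integrals yields the leading $a^{-2}$; all remaining contributions form $R_1(a)$, which is smooth on $(0,\infty)$, is $O(|\log a|)$ as $a\to0$ (from the $s^2$-term), and, by the crude bound $|I_1(a)|\le\pi a^{-3}$, is $O(a^{-2})$ as $a\to\infty$; hence $a\,R_1(a)$ — being $O(a|\log a|)$ near $0$ and $O(a^{-1})$ near $\infty$ — is bounded on $(0,\infty)$. For $I_2$, the leading term $\tfrac12\int_0^\delta s^2(a^2+s^2)^{-3/2}\mathrm{d}s$ is $-\tfrac12\log a+O(1)$ as $a\to0$ and $O(a^{-1})$ for $a\ge1$, while the rest is bounded and $O(a^{-1})$ for $a\ge1$; thus $I_2(a)=-\tfrac12\log a\cdot\chi_{(0,1)}(a)+R_2(a)$ with $R_2$ bounded on $(0,\infty)$, continuous at $a=1$, and $O(a^{-1})$ for $a\ge1$, i.e. $|R_2(a)|\le C\min(1,1/a)$.

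The only delicate point is the bookkeeping of the logarithms: a $\log a$ genuinely occurs in $I_2$ and only for $a<1$ (whence the factor $\chi_{(0,1)}$), whereas in $I_1$ it is absorbed into $R_1$ — which is precisely why the statement claims boundedness of $a\,R_1(a)$ rather than of $R_1(a)$ itself; one must also check that the remainders glue continuously at $a=1$ and decay at infinity. Everything else is explicit elementary integration. (An alternative for $I_1$: the identity $\frac{\mathrm{d}}{\mathrm{d}\theta}\!\left[\frac{\sin\theta}{\sqrt{a^2+2(1-\cos\theta)}}\right]=\frac{a^2\cos\theta-(1-\cos\theta)^2}{[a^2+2(1-\cos\theta)]^{3/2}}$ has vanishing boundary terms at $0$ and $\pi$, giving $a^2I_1(a)=\int_0^\pi(1-\cos\theta)^2[a^2+2(1-\cos\theta)]^{-3/2}\,\mathrm{d}\theta$, to which the same localization applies.)
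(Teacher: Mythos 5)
Your argument is correct and follows essentially the same route as the paper: the half-angle substitution $s=2\sin(\theta/2)$ reducing everything to the explicit primitives of $(a^2+s^2)^{-3/2}$ and $s^2(a^2+s^2)^{-3/2}$, together with crude $O(a^{-3})$ bounds for $a\ge 1$. The only difference is cosmetic: the paper makes the substitution exact on the main piece by splitting the numerator as $\cos(\theta/2)+(1-\cos(\theta/2))$ (resp.\ $\cos(\theta/2)+(\cos\theta-\cos(\theta/2))$), whereas you carry the Jacobian $(1-s^2/4)^{-1/2}$ and Taylor-expand it after localizing near $s=0$ — both lead to the same remainder estimates.
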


\begin{proof} 
Let us consider first $I_2$. We recall that $1-\cos\, \theta = 2[\sin(\theta/2)]^2$ and we write the integral as:
\begin{equation} \label{int2} I_2= \int_0^\pi d\theta \frac{2 [\sin (\theta/2)]^2 \, \cos(\theta/2)}{ \{a^2+4 [\sin(\theta/2)]^2 \, \}^{3/2}}+ 
   \int_0^\pi d\theta \frac{2 [\sin (\theta/2)]^2 (1-\cos(\theta/2))}{ 
   \{a^2+4 [\sin(\theta/2)]^2 \, \}^{3/2}} \, . 
	\end{equation}  
By the substitution $z= 2 \sin (\theta/2)$,  for the first integral in the right hand side
of \eqref{int2} we have
$$
\begin{aligned}
\int_0^{2} dz \frac{z^2}{ 2[a^2+z^2]^{3/2}}  &= \frac{1}{2} \bigg[ \log( \sqrt{a^2+z^2}+ z) - \frac{z}{\sqrt{a^2+z^2}} \bigg]_{z=0}^{z=2}  \\
 &= -(a^2+4)^{-1/2} +\frac{1}{2} \log(2+\sqrt{a^2+4}) - \frac{1}{2} \log a \, .
 \end{aligned}
$$
We deduce that for $a \rightarrow 0$ this quantity is equal to $- \frac{1}{2} \log a $   plus a bounded rest, while for $a \rightarrow \infty$ it behaves like $a^{-1}$.
 
\noindent For the second integral in \eqref{int2}, first of all we have
\[ \int_0^\pi d\theta \frac{2 [\sin (\theta/2)]^2 (1-\cos(\theta/2))}{ 
   \{a^2+4 [\sin(\theta/2)]^2 \, \}^{3/2}} \leq \frac{1}{4} \int_0^\pi d\theta \, \frac{1-\cos(\theta/2)}{\sin(\theta/2)} \]
   which is a bounded integral; on the other hand
\[ \int_0^\pi d\theta \frac{2 [\sin (\theta/2)]^2 (1-\cos(\theta/2))}{ 
   \{a^2+4 [\sin(\theta/2)]^2 \, \}^{3/2}} \leq \frac{2}{a^3} \int_0^\pi d\theta \sin^2(\theta/2) (1- \cos(\theta/2)) \leq C a^{-3} . \]
	These estimates for the two integrals in \eqref{int2} show that  the equality for $I_2$
	in \eqref{separ}
	holds with
$R_2(a)$ bounded by a constant for small $a$, and 
   by $C a^{-1}$ for large $a$.
   
\noindent   We evaluate now $I_1$, first when $a<1$, by decomposing the integral as
\begin{equation} \label{int1} I_1= \int_0^\pi d\theta \frac{\cos(\theta/2)}{\{a^2+2[\sin(\theta/2)]^2 \,\}^{3/2}} +
  \int_0^\pi d\theta \frac{\cos \theta - \cos(\theta/2)}{ \{a^2+2[\sin(\theta/2)]^2\,\}^{3/2}} . \end{equation}
The first integral in the right hand side
of \eqref{int1} can be computed as before with the substitution $z= 2 \sin (\theta/2)$:
\[ \int_0^2 dz \frac{1}{[a^2+z^2]^{3/2}} =  \bigg[ \frac{z}{a^2 \sqrt{a^2+z^2}} \bigg]_{z=0}^{z=2} = \frac{2}{a^2\sqrt{a^2+4}} \]
which, for $a \rightarrow 0$, is equal to $a^{-2}$ plus a bounded rest.
The second integral in \eqref{int1} can be bounded by noticing that 
\[0 \leq \cos(\theta/2) - \cos\theta \leq 1-\cos\theta \qquad \text{ for }0 \leq \theta \leq \pi\]
and hence it can be bounded by $I_2$.

\noindent We analyse now the case $a\geq1$, observing that
\[ |I_1(a)| \leq a^{-3} \int_0^\pi d\theta \, |\cos \theta | = \frac{2}{a^3}\]
and so $|R_1(a)| \leq 2 a^{-3}+ a^{-2}$. 
In both cases, $a<1$ and $a\geq 1$, we have that $a \cdot R_1(a)$ is bounded, as it goes to zero like $a \log a$ when $a \rightarrow 0$, and  behaves like $a^{-1}$ when $a \rightarrow \infty$.
\end{proof}

\begin{proposition} \label{stimaD}
Consider $x,y$ such that:
\[   |x_2| \leq \frac{r_0}{2} \qquad |y_2| \leq \frac{r_0}{2}  \]
 and let $r_0 = |\log\epsilon|^{\alpha}$. Then, for $\epsilon$ small enough:
\begin{equation}
 |G(x,y)- K(x-y)| \leq \frac{C}{|\log\epsilon|^\alpha} \: \left( 1+  \log|\log\epsilon| + 
 \big|\log |x-y|\,\big| \cdot \chi_{(0,1)}(|x-y|) \right) \, . 
\label{prop3.2}
\end{equation}
\end{proposition}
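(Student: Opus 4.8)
The plan is to reduce everything to the one‑variable integrals $I_1,I_2$ of the previous Lemma. Set $\rho:=\sqrt{(r_0+x_2)(r_0+y_2)}$ and $a:=|x-y|/\rho$; the hypotheses $|x_2|,|y_2|\le r_0/2$ guarantee $r_0+x_2,\,r_0+y_2\in[r_0/2,3r_0/2]$, hence $\rho\in[r_0/2,3r_0/2]$ and $|x_2-y_2|\le r_0$. With this notation the denominators in $G_1,G_2$ factor as $\{|x-y|^2+2(r_0+x_2)(r_0+y_2)(1-\cos\theta)\}^{3/2}=\rho^3\{a^2+2(1-\cos\theta)\}^{3/2}$. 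Using the identity $(r_0+y_2)-(r_0+x_2)\cos\theta=(y_2-x_2)+(r_0+x_2)(1-\cos\theta)$ in $G_1$ and the cancellation $(r_0+x_2)(r_0+y_2)=\rho^2$, one obtains
\[ G_1(x,y)=\frac{(r_0+y_2)(y_2-x_2)}{2\pi\rho^3}\bigl(I_1(a)+I_2(a)\bigr)+\frac{1}{2\pi\rho}\,I_2(a),\qquad G_2(x,y)=\frac{(r_0+y_2)(x_1-y_1)}{2\pi\rho^3}\,I_1(a). \]

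I would then insert $I_1(a)=a^{-2}+R_1(a)$ and $I_2(a)=-\tfrac12\log a\cdot\chi_{(0,1)}(a)+R_2(a)$. Since $a^{-2}=\rho^2/|x-y|^2$, the terms carrying $a^{-2}$ reduce to $\tfrac{r_0+y_2}{\rho}K_1(x-y)$ and $\tfrac{r_0+y_2}{\rho}K_2(x-y)$; because $\bigl|\sqrt{t}-1\bigr|\le|t-1|$ and $\bigl|\tfrac{r_0+y_2}{r_0+x_2}-1\bigr|=\tfrac{|y_2-x_2|}{r_0+x_2}\le\tfrac{2|x-y|}{r_0}$, while $\tfrac{|y_2-x_2|}{|x-y|},\tfrac{|x_1-y_1|}{|x-y|}\le1$, the discrepancy between these leading terms and $K_1,K_2$ is $O(1/r_0)=O(|\log\epsilon|^{-\alpha})$.

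It remains to estimate the genuine remainder contributions $\tfrac{(r_0+y_2)(y_2-x_2)}{2\pi\rho^3}\bigl(R_1(a)+I_2(a)\bigr)$, $\tfrac{(r_0+y_2)(x_1-y_1)}{2\pi\rho^3}R_1(a)$ and $\tfrac{1}{2\pi\rho}I_2(a)$. For the first two, the Lemma gives $|R_1(a)|\le C/a=C\rho/|x-y|$ (since $a R_1(a)$ is bounded) and $|R_2(a)|\le C/a$, and as $\tfrac{(r_0+y_2)|y_2-x_2|}{\rho^3}\le\tfrac{4\min(|x-y|,r_0)}{r_0^2}$ (similarly with $|x_1-y_1|$), the $R_1$‑terms and the $R_2$‑part of $I_2$ collapse to $O(1/r_0)$; the $\log a$‑part of $I_2$ here is treated by writing $|\log a|\,\chi_{(0,1)}(a)=\log\rho-\log|x-y|$ on $\{a<1\}$, using that $\min(|x-y|,r_0)\log\rho\le r_0\log(3r_0/2)$ and $\min(|x-y|,r_0)(-\log|x-y|)$ is bounded on the relevant range, so it contributes $O(r_0^{-1}\log|\log\epsilon|)$. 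The decisive term is $\tfrac{1}{2\pi\rho}I_2(a)$: its $R_2$‑part is $O(1/\rho)=O(|\log\epsilon|^{-\alpha})$, and for the $\log a$‑part one again uses $|\log a|\,\chi_{(0,1)}(a)\le\log\rho+|\log|x-y||\,\chi_{(0,1)}(|x-y|)$ — the point being that when $1\le|x-y|<\rho$ we have $a\in(0,1)$ but $-\log|x-y|\le0$, so only $\log\rho$ survives — which yields $\tfrac{C}{r_0}\bigl(\log|\log\epsilon|+|\log|x-y||\,\chi_{(0,1)}(|x-y|)\bigr)$. Since $\log\rho\le\log(3r_0/2)=O(\log|\log\epsilon|)$ and $r_0=|\log\epsilon|^\alpha$, summing the bounds for $|G_1-K_1|$ and $|G_2-K_2|$ gives \eqref{prop3.2}.

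The step I expect to be the main obstacle is the bookkeeping in the regime $|x-y|\gtrsim1$ (so that $a$ is not small): there $K(x-y)$ is itself small and one must still verify that $G$ stays $|\log\epsilon|^{-\alpha}$‑close to it. What makes this go through is the exact cancellation producing the small prefactor $\tfrac{1}{2\pi\rho}$ in front of $I_2(a)$ in $G_1$, the uniform bound $|R_1(a)|\le C/a$, the crude bound $|x_2-y_2|\le r_0$ coming from the hypotheses, and the observation that $\log a=\log\rho-\log|x-y|$ carries a genuine $|\log|x-y||$ only once $|x-y|<1$ — precisely the $\chi_{(0,1)}$ appearing in the statement.
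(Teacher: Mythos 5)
Your proof is correct and follows essentially the same route as the paper's: the same variable $a=|x-y|/\sqrt{(r_0+x_2)(r_0+y_2)}$, the same reduction of $G_1,G_2$ to the one-dimensional integrals $I_1,I_2$ (your grouping $(y_2-x_2)(I_1+I_2)+(r_0+x_2)I_2$ is algebraically identical to the paper's $(y_2-x_2)I_1+(r_0+y_2)I_2$), the same identification of the leading term $A\,K(x-y)$ with $|A-1|\le 2|x-y|/r_0$, and the same handling of $-\log a=\log\rho-\log|x-y|$ producing the $\log|\log\epsilon|$ and $|\log|x-y||\,\chi_{(0,1)}(|x-y|)$ contributions. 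Your explicit bookkeeping in the regime $|x-y|\gtrsim 1$ is a slightly more careful writing of what the paper leaves implicit, but the argument is the same.
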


\begin{proof}
Define $a:= |x-y|\: (r_0+x_2)^{-1/2} (r_0+y_2)^{-1/2}$.
\begin{equation}
\begin{aligned}
2\pi G_1(x, y) &= 
  \int_0^{\pi} d\theta \, \frac{(r_0+y_2)(y_2-x_2 \, \cos \theta +r_0 (1-\cos \theta))}
  {(r_0+x_2)^{3/2} \, (r_0+y_2)^{3/2} \,  \{a^2+2(1-\cos\theta)\}^{3/2} }  \\
&= \frac{y_2 \cdot \big(I_1(a)+I_2(a)\big) - x_2 \cdot I_1(a) + r_0 \cdot I_2(a) }{(r_0+x_2)^{3/2} (r_0+y_2)^{1/2}} \\
&= \frac{y_2-x_2}{(r_0+x_2)^{3/2} (r_0+y_2)^{1/2}} \cdot I_1(a) + \frac{(r_0+y_2)^{1/2}}{(r_0+x_2)^{3/2}} \cdot I_2(a) \\
& = \sqrt{ \frac{r_0+y_2}{r_0+x_2} } \cdot \frac{y_2-x_2}{|x-y|^2} + \frac{y_2-x_2}{(r_0+x_2)^{3/2} (r_0+y_2)^{1/2}} \cdot R_1(a)\, \\
& \qquad + \sqrt{\frac{r_0+y_2}{(r_0+x_2)^3}} \cdot I_2(a) \, .
\end{aligned}
\label{2piG}
\end{equation}
Note that $\frac{y_2-x_2}{|x-y|^2}$ is the first component of $2 \pi K(x-y)$, so we subtract this quantity and estimate $|G_1-K_1|$.

Let us put $A=\sqrt{\frac{r_0+y_2}{r_0+x_2}}$, hence we have
\[ |A-1|=\frac{|A^2-1|}{|A+1|}=\bigg(\frac{|y_2-x_2|}{r_0+x_2}\bigg)(1+A)^{-1} \leq \frac{2|x-y|}{r_0} \]
where in the last inequality we have used the assumption $x_2 \geq - \frac{r_0}{2} $. Furthermore
$$
\begin{aligned}
 \sqrt{\frac{r_0+y_2}{(r_0+x_2)^3}} = & \frac{1}{(r_0+x_2)^{1/2}\, (r_0+y_2)^{1/2}} \cdot \frac{r_0+y_2}{r_0+x_2}  \\
 \leq & \frac{1}{(r_0+x_2)^{1/2}\, (r_0+y_2)^{1/2}} \cdot \frac{(r_0+x_2) + |x-y|}{r_0+x_2}  \\
 \leq & \frac{1}{(r_0+x_2)^{1/2}\, (r_0+y_2)^{1/2}} + \frac{a}{r_0+x_2} \leq \frac{2}{r_0}(1+a) \, .
 \end{aligned}
$$
Collecting these estimates  into \eqref{2piG}, and calling $D:= G-K$, we get:
$$
\begin{aligned}
2\pi |D_1(x,y)| \leq & \frac{2|x-y|\,|y_2-x_2|}{r_0 \, |x-y|^2} + \frac{a \cdot R_1(a)}{r_0+x_2} + \frac{2}{r_0}(1+a) \cdot I_2(a)  \\
& \leq \frac{2}{r_0} + \frac{2 C}{r_0} + \frac{2}{r_0} \left(C - \frac{1}{2} \, \log a \cdot \chi_{(0,1)}(a) \right) .
\end{aligned}
$$
For $a \in (0,1)$
$$
\begin{aligned} 0 \leq -\log a= &  \log(r_0+x_2)^{1/2} + \log(r_0+y_2)^{1/2} - \log |x-y|  \\
 \leq & \log r_0 + \log (3/2) +
\big|\log|x-y| \big| \cdot \chi_{(0,1)} (|x-y|)
\end{aligned}
$$
where, by assumption, $\frac{1}{2} r_0 \leq r_0 + x_2 \leq \frac{3}{2}r_0$. 
Since $r_0= |\log\epsilon|^{\alpha}$  we obtain
\[ 2\pi |D_1(x,y)| \leq \frac{C}{|\log\epsilon|^\alpha}  \bigg[ 1 +  \log|\log \epsilon| +
\big|\log|x-y| \big| \cdot \chi_{(0,1)} (|x-y|) \bigg] .\]

\noindent We analyse now  the second component:
$$
\begin{aligned}
 2\pi G_2(x,y) = & \int_0^\pi \frac{(r_0+y_2)(x_1-y_1)\, \cos \theta}{(r_0+x_2)^{3/2} \, (r_0+y_2)^{3/2} \,  \{a^2+2(1-\cos\theta)\}^{3/2}} \\
 = & \frac{x_1-y_1}{(r_0+x_2)^{3/2} \, (r_0+y_2)^{1/2}} \cdot I_1(a) \\
  = & \sqrt{\frac{r_0+y_2}{r_0+x_2}} \cdot \frac{x_1-y_1}{|x-y|^2} + \frac{x_1-y_1}{(r_0+x_2)^{3/2} \, (r_0+y_2)^{1/2}} R_1(a) \, .
  \end{aligned}
$$
We proceed as before,
\[ 2 \pi |D_2(x,y)| \leq \frac{2|x-y|\,|x_1-y_1|}{r_0 \, |x-y|^2} + \frac{a \cdot R_1(a)}{r_0+x_2} \leq \frac{2}{r_0}(1+C) \, .\]
Since $|D| \leq |D_1|+|D_2|$, we get
\[ |D(x,y)| \leq \frac{C}{|\log\epsilon|^\alpha} \left(1+ \log|\log \epsilon| + \big|\log |x-y| \big| \cdot \chi_{(0,1)} (|x-y|) \right) \, ,\]
and the Proposition is thus proved.
\end{proof}
The next lemma states the conservation of the $L^1$ norm of $\omega$ and a bound on its $L^\infty$
norm.
\begin{lemma} \label{norme di omega}
Let $\omega_{i, \epsilon}$  as in  \eqref{in_data}-\eqref{t_data}. Then
\begin{equation} 
\label{norme}
 \int_{\mathbb{R}^2} dx \, \omega_{i, \epsilon}(x,t) = \int_{\mathbb{R}^2} dx \, 
\omega_{i, \epsilon}(x,0) = a_i 
\end{equation}
and for each time $\omega_{i, \epsilon}(x, t)$ has the same sign of $\omega_{i, \epsilon}(x,0)$. 
Moreover for $t \leq  \bar T $ and for small enough $\epsilon$, 
\begin{equation} 
\label{norme2}
 |\, \omega_{i, \epsilon}(x, t) | \leq 3 M\, \epsilon^{-\gamma} \, .
\end{equation}
\end{lemma}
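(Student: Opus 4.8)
The plan is to read off all three assertions from the transport representation \eqref{term i EA},
\[
\omega_{i,\epsilon}(x,t)=\frac{r_0+x_2}{r_0+(\phi_{-t}(x))_2}\,\omega_{i,\epsilon}(\phi_{-t}(x),0)\,,
\]
together with two structural facts recalled in Section \ref{sec1}: the support of $\omega_\epsilon(\cdot,t)$ stays in the open half-plane $\Pi=\{r>0\}$, so that $r_0+(\phi_{-t}(x))_2$ equals the strictly positive radial coordinate of the corresponding fluid element at time $0$ along every trajectory; and the flow $\phi_t$ is a well-defined homeomorphism, invertible in time, because $\omega_\epsilon(\cdot,t)$ is bounded with compact support kept away from the axis, so the velocity field is log-Lipschitz (the kernel $G$ being close to the $2$D Biot--Savart kernel $K$, cf.~Proposition \ref{stimaD}; see \cite{mpulv,CS} for the well-posedness). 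Sign preservation is then immediate: in \eqref{term i EA} the prefactor is a quotient of two positive numbers, hence positive, so $\omega_{i,\epsilon}(x,t)$ carries the common definite sign of $\omega_{i,\epsilon}(\cdot,0)$.

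For the conservation of the integral I would change variables by $x=\phi_t(y)$, so that $\phi_{-t}(x)=y$ and $dx=\det D_y\phi_t(y)\,dy$, and invoke the Jacobian identity
\[
\det D_y\phi_t(y)=\frac{r_0+y_2}{r_0+(\phi_t(y))_2}\,,
\]
which follows from Liouville's formula: by \eqref{1.7}, in the coordinates \eqref{coordinate} the planar divergence of $(u_z,u_r)$ equals $-u_r/(r_0+x_2)$, while $\frac{d}{dt}\bigl(r_0+(\phi_t(y))_2\bigr)=u_r(\phi_t(y),t)$, so the two logarithmic derivatives cancel and $\det D_y\phi_t(y)\cdot\bigl(r_0+(\phi_t(y))_2\bigr)$ is constant in $t$; evaluating at $t=0$ gives the identity. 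Substituting it into \eqref{term i EA}, the prefactor exactly cancels the Jacobian and we are left with $\int\omega_{i,\epsilon}(y,0)\,dy=a_i$. Equivalently, and sidestepping any smoothness worry about $\phi_t$, one may take in the weak formulation \eqref{1.12} a test function $f$ that is identically $1$ on a neighborhood of $\Lambda_{i,\epsilon}(t)$ for every $t\in[0,\bar T]$ and vanishes near the other blobs (possible since the blobs stay in disjoint disks $\Sigma(z_i(t)\,|\,\bar R)$), which gives $\frac{d}{dt}\int\omega_{i,\epsilon}(x,t)\,dx=0$ directly.

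For the $L^\infty$ bound, suppose $\omega_{i,\epsilon}(x,t)\neq0$ with $t\le\bar T$. Then $x\in\Sigma(z_i(t)\,|\,\bar R)$ by the definition of $T_\omega$, and $\phi_{-t}(x)\in\Lambda_{i,\epsilon}(0)\subseteq\Sigma(z_i\,|\,\epsilon)$, since otherwise the right-hand side of \eqref{term i EA} would vanish. As the $z_i(\cdot)$ stay in a fixed compact subset of $\mathbb{R}^2$ on $[0,\bar T]$ (they solve \eqref{pvmodel} up to the time $T_*$, below the first collapse), both $|x_2|$ and $|(\phi_{-t}(x))_2|$ are bounded by a constant $C$ independent of $\epsilon$. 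Hence the prefactor in \eqref{term i EA} is at most $(r_0+C)/(r_0-C)$, which tends to $1$ as $r_0=|\log\epsilon|^\alpha\to\infty$ and so is $\le 3$ for $\epsilon$ small; combined with \eqref{mass} this yields $|\omega_{i,\epsilon}(x,t)|\le 3M\epsilon^{-\gamma}$, trivially true off the support as well. Nothing here is a serious obstacle; the only point that deserves a word of care is the Jacobian identity for the flow of a merely log-Lipschitz field, which is dealt with either by the weak-formulation argument above or by approximating the initial datum with smooth vorticities and passing to the limit.
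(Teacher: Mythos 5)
Your proof is correct and follows essentially the same route as the paper: sign preservation and the $L^\infty$ bound are read off the transport representation \eqref{term i EA} with the prefactor squeezed between $\tfrac12 r_0$ and $\tfrac32 r_0$ (the paper gets exactly the factor $3$ this way), and the conservation of $\int\omega_{i,\epsilon}$ is the paper's one-line appeal to the conservation of $\omega/r$ under a flow preserving $r\,dr\,dz$, which your Liouville/Jacobian computation (and the weak-formulation alternative) merely makes explicit. No gaps.
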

\begin{proof}
Equation \eqref{norme} is a direct consequence of the conservation of $\omega/r$, after
integrating in $\mathbb{R}^3$ adopting cylindrical coordinates.
The conservation of sign is evident by the definition of $\omega_{i, \epsilon}(x,t)$.
To obtain \eqref{norme2}  we observe that, if $x \in \Lambda_{i,\epsilon}(0) := \mathrm{supp} \, \omega_{i, \epsilon}(0)$, then
$|x-z_i| \leq \epsilon$ and
\[ |r_0+x_2| = ||\log\epsilon|^\alpha + x_2 | \geq |\log\epsilon|^\alpha - |z_i|
- \epsilon \geq \frac{1}{2} |\log\epsilon|^\alpha \, . \]
Since $\phi_t(x) \in \Lambda_{i, \epsilon}(t)$, it results $|\phi_t(x)- z_i(t)| \leq \bar R$ and then
\[ |r_0+ \phi^t_2(x)| \leq |\log\epsilon|^\alpha + |z_i(t)|+ \bar R \, . \]
Moreover
\[ |z_i(t)| \leq |z_i|+ \frac{1}{2\pi}\int_0^t ds \, \sum_{j \neq i}   \frac{|a_j|}{|z_i(s)- z_j(s)|} 
\leq |z_i|+ \frac{t}{2 \pi R_m} \sum_{j \neq i} |a_j| \, , \]
therefore, for $t \leq \bar T$ and for $\epsilon$ small enough,
\[ |r_0+ \phi^t_2(x)| \leq \frac{3}{2} |\log\epsilon|^\alpha \, .\]
The bound \eqref{norme2} follows from the equality
\[ \omega_{i, \epsilon}(\phi_t(x),t)= \frac{r_0+ \phi^t_2(x)}{r_0+x_2} \omega_{i, \epsilon}(x,0) \, .\]
\end{proof}

\noindent We are now able to prove that the difference $u-\widetilde{u}$ is small.
\begin{proposition} \label{u e utilde}
Let $\omega_{\epsilon}$ as in  \eqref{in_data}-\eqref{t_data}, and let $t \leq \bar T$, $x \in \Lambda_{\epsilon}(t)$. 
Then, if $\epsilon$ is small enough, 
\begin{equation} 
|u(x,t)-\widetilde{u}(x, t)| \leq \frac{C}{|\log\epsilon|^{\alpha-1}} \, .
\label{u e utilde 2}
\end{equation}
\end{proposition}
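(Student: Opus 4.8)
The plan is to express the difference as a convolution against the difference kernel $D:=G-K$ and then to invoke the pointwise estimate of Proposition~\ref{stimaD}. Since $\omega_\epsilon(\cdot,t)$ is supported on $\Lambda_\epsilon(t)=\bigcup_i\Lambda_{i,\epsilon}(t)$,
\[
u(x,t)-\widetilde{u}(x,t)=\int_{\Lambda_\epsilon(t)} \big(G(x,y)-K(x-y)\big)\,\omega_\epsilon(y,t)\,dy=\int_{\Lambda_\epsilon(t)} D(x,y)\,\omega_\epsilon(y,t)\,dy .
\]
First I would check that Proposition~\ref{stimaD} applies to the relevant $x,y$: for $t\le\bar T$ and $y\in\Lambda_\epsilon(t)$ one has $|y-z_i(t)|\le\bar R$ for some $i$, and $|z_i(t)|\le C$ by the bound on $|z_i(t)|$ obtained in the proof of Lemma~\ref{norme di omega}; hence $|y_2|\le C<\tfrac12|\log\epsilon|^{\alpha}$ for $\epsilon$ small, and likewise $|x_2|<\tfrac12|\log\epsilon|^{\alpha}$ since $x\in\Lambda_\epsilon(t)$. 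Proposition~\ref{stimaD} then gives
\[
|D(x,y)|\le\frac{C}{|\log\epsilon|^{\alpha}}\Big(1+\log|\log\epsilon|+\big|\log|x-y|\big|\cdot\chi_{(0,1)}(|x-y|)\Big).
\]

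Next I would split the integral into the bounded part and the logarithmically singular part. For the bounded part, conservation of mass and sign (Lemma~\ref{norme di omega}) gives $\int|\omega_\epsilon(y,t)|\,dy=\sum_i|a_i|\le C$, so that term contributes at most $C(1+\log|\log\epsilon|)/|\log\epsilon|^{\alpha}\le C/|\log\epsilon|^{\alpha-1}$, using $\log|\log\epsilon|\le C|\log\epsilon|$. For the singular part I would combine the $L^\infty$ bound $|\omega_\epsilon(\cdot,t)|\le 3M\epsilon^{-\gamma}$ (again Lemma~\ref{norme di omega}) with the standard concentration/rearrangement estimate: among nonnegative functions bounded by $L:=3M\epsilon^{-\gamma}$ with total mass $A:=\sum_i|a_i|$, the integral of the radially decreasing weight $\big|\log|x-y|\big|\chi_{(0,1)}(|x-y|)$ is largest when the mass is packed into the disk $\Sigma(x\,|\,\rho)$ with $\pi\rho^2 L=A$, i.e. $\rho\simeq\epsilon^{\gamma/2}$; since $\int_0^\rho s\,|\log s|\,ds\le C\rho^2|\log\rho|$, this yields
\[
\int_{\Lambda_\epsilon(t)}\big|\log|x-y|\big|\,\chi_{(0,1)}(|x-y|)\,|\omega_\epsilon(y,t)|\,dy\le C\,L\,\rho^{2}\,|\log\rho|\le C|\log\epsilon| .
\]
Multiplying by $C/|\log\epsilon|^{\alpha}$ gives a contribution $\le C/|\log\epsilon|^{\alpha-1}$, and summing the two contributions proves \eqref{u e utilde 2}.

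The only genuinely delicate point is the concentration estimate in the last display: one must verify that the logarithmic singularity of the kernel costs exactly one extra power of $|\log\epsilon|$ and no more, which is precisely what degrades the exponent from $\alpha$ to $\alpha-1$; the rest is bookkeeping. A minor subtlety is that Proposition~\ref{stimaD} is valid only on the strip $|x_2|,|y_2|\le r_0/2$, which is why one first localizes to $\Lambda_\epsilon(t)$ and uses the a priori boundedness of the vortex centers established in Lemma~\ref{norme di omega}; outside that strip $\omega_\epsilon(\cdot,t)$ vanishes for $\epsilon$ small, so no extra term arises.
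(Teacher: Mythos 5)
Your proposal is correct and follows essentially the same route as the paper: verify the hypotheses of Proposition~\ref{stimaD} via the support localization from Lemma~\ref{norme di omega}, bound the non-singular part of the kernel by the conserved total vorticity mass, and control the logarithmic singularity by the same rearrangement argument with the $L^\infty$ bound $3M\epsilon^{-\gamma}$, yielding the extra factor $|\log\epsilon|$ that degrades the exponent from $\alpha$ to $\alpha-1$. The only cosmetic difference is that the paper performs the rearrangement vortex by vortex (mass $|a_i|$) rather than with the total mass, which changes nothing.
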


\begin{proof}
Note that, if $x \in \Lambda_{i, \epsilon}(t)$ (for any $i$), then $|x_2| \leq r_0/2$, 
as seen in the proof of Lemma \ref{norme di omega}. 
We can then apply Proposition \ref{stimaD} to bound $|u(x,t)-\widetilde{u}(x, t)|$,
and the worst term to treat is
\begin{equation}
\begin{aligned}
 &\int_{|x-y| <1} dy \, \big|\log |x-y|\big| \,\, |\omega_\epsilon(y, t)|  \\
&\leq\int_{|x-y| <1} dy \, \big|\log |x-y|\big| \, \sum_{i=1}^N |\omega_{i, \epsilon}(y, t)| \, .
\end{aligned}
\label{arrang}
\end{equation}
We use a classical trick, that is a rearrangement:
we bound this integral with the one obtained by concentrating as much as possible the vorticity around the singularity of $\log |x-y| $, namely $y=x$
(and by the characteristic function we integrate only in the domain $|x-y|<1$).
Let us proceed with a fixed $i$ in the summation in \eqref{arrang}.
Since the integral of $\omega_{i, \epsilon}$ is constant in time,  
and its $L^\infty$ norm is less or equal to $3 M \epsilon^{-\gamma}$, 
we get the rearrangement replacing $\omega_{i, \epsilon}$ with the function equal to the constant
$3 M\epsilon^{-\gamma}$ in the disk of centre $x$ and radius $r$, and equal to zero outside this disk.
The radius $r$ is chosen such that the total mass of vorticity is $|a_i|$, 
so $\pi r^2 \cdot 3 M\epsilon^{-\gamma} =|a_i|$.
We have then:
$$
\begin{aligned}
\int_{|x-y| <1} dy \, \big|\log |x-y|\big| \,\, |\omega_{i,\epsilon}(y, t)| \leq & \: \: 3 M \epsilon^{-\gamma} \int_{ \{ |x-y| \leq r \} } d y \big|\log |x-y| \big| \\
=& - 6 \pi M \epsilon^{-\gamma} \int_0^r \rho \log(\rho) d\rho \\
=&  - 6 \pi M \epsilon^{-\gamma} \bigg[ \frac{\rho^2}{2} \log \rho - \frac{\rho^2}{4} \bigg]_{\rho=0}^{\rho=r} \\
=& - 3 \pi M \epsilon^{-\gamma} r^2 \log r + \frac{3}{2} \pi M \epsilon^{-\gamma} r^2 .
\end{aligned}
$$
Since $r=\sqrt{\frac{|a_i| \epsilon^\gamma}{ 3 \pi M}}$, we get
\[ \int_{|x-y| <1} dy \, \big|\log |x-y|\big| \, \sum_{i=1}^N |\omega_{i, \epsilon}(y, t)| \leq C+ C \, |\log \epsilon| , \]
which inserted in \eqref{prop3.2} gives \eqref{u e utilde 2}.
\end{proof}
We give here a useful split of the field $F^i_\epsilon$.
\begin{lemma} \label{comesonoFeps}
Recalling the definition of $F^i_\epsilon$ given in  \eqref{effe_iep}, we can write, for $t \leq \bar T$
\[ F^i_{\epsilon}= F^i_{\epsilon, 1} + F^i_{\epsilon, 2} \]
where $F^i_{\epsilon, 1}$ is Lipschitz and bounded uniformly in $\epsilon$, and $F^i_{\epsilon, 2}$ is small, i.e. 
\[\|F^i_{\epsilon, 2}\|_{L^\infty} \leq \frac{C}{|\log \epsilon|^{\alpha-1}} \, . \]
\end{lemma}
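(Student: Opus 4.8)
The strategy is to compare $F^i_\epsilon$ with its two‑dimensional counterpart. I would set
\[
F^i_{\epsilon,1}(x,t):=\sum_{j\neq i}\int dy\,K(x-y)\,\omega_{j,\varepsilon}(y,t)\,,\qquad
F^i_{\epsilon,2}:=F^i_\epsilon-F^i_{\epsilon,1}=\sum_{j\neq i}\int dy\,\big(G(x,y)-K(x-y)\big)\,\omega_{j,\varepsilon}(y,t)\,,
\]
the first being the planar Biot--Savart field of the remaining blobs. The key geometric remark is that, for $t\le\bar T$, the supports of the $\omega_{j,\varepsilon}(t)$ with $j\neq i$ lie in the disks $\Sigma(z_j(t)|\bar R)$, which by $\bar R<R_m/4$ are separated from $\Sigma(z_i(t)|\bar R)\supseteq\Lambda_{i,\epsilon}(t)$ by a distance bounded below by $R_m/2$; so in both integrals the kernels are evaluated away from their singularities. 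Note that $F^i_{\epsilon,1}$ still depends on $\epsilon$, but the bounds I will obtain for it will not.

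For $F^i_{\epsilon,1}$ I would work on the ball $\{\,|x-z_i(t)|\le R_m/2\,\}$ (which contains $\Lambda_{i,\epsilon}(t)$ and is the only region where the field is used). There, for every $j\neq i$ and every $y$ in the support of $\omega_{j,\varepsilon}(t)$, one has $|x-y|\ge R_m/4$, so $|K(x-y)|\le \tfrac1{2\pi|x-y|}$ and $|\nabla K(x-y)|\le C|x-y|^{-2}$ are controlled; combining this with $\int dy\,|\omega_{j,\varepsilon}(y,t)|=|a_j|$ from Lemma \ref{norme di omega} gives $|F^i_{\epsilon,1}(x,t)|\le C R_m^{-1}\sum_{j\neq i}|a_j|$ and, using convexity of the ball to pass from the gradient bound to a Lipschitz bound, $|F^i_{\epsilon,1}(x,t)-F^i_{\epsilon,1}(x',t)|\le C R_m^{-2}\big(\sum_{j\neq i}|a_j|\big)|x-x'|$. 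Since $R_m$ and the $a_j$ are independent of $\epsilon$, this is the required uniform bound.

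For $F^i_{\epsilon,2}$ I would invoke Proposition \ref{stimaD}. First I check its hypotheses: for $x\in\Lambda_{i,\epsilon}(t)$ and $y$ in the support of $\omega_{j,\varepsilon}(t)$, $t\le\bar T$, the argument in the proof of Lemma \ref{norme di omega} gives $|x_2|,|y_2|\le C$ (the $z_k(t)$ remain in a bounded region), hence $|x_2|,|y_2|\le r_0/2$ once $\epsilon$ is small, because $r_0=|\log\epsilon|^\alpha\to\infty$. Proposition \ref{stimaD} then yields
\[
|G(x,y)-K(x-y)|\le \frac{C}{|\log\epsilon|^{\alpha}}\Big(1+\log|\log\epsilon|+\big|\log|x-y|\big|\,\chi_{(0,1)}(|x-y|)\Big)\,;
\]
but on $\Lambda_{i,\epsilon}(t)$ one has $|x-y|\ge R_m/2$, so the last term is just a constant, and the whole right‑hand side is $\le C|\log\epsilon|^{-(\alpha-1)}$ for $\epsilon$ small. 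Integrating in $y$ against $|\omega_{j,\varepsilon}(\cdot,t)|$ (total mass $|a_j|$) and summing over $j\neq i$ gives $\|F^i_{\epsilon,2}\|_{L^\infty(\Lambda_{i,\epsilon}(t))}\le C|\log\epsilon|^{-(\alpha-1)}$.

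There is no genuine obstacle here; the only points demanding care are bookkeeping ones: that the extra factor $\log|\log\epsilon|$ produced by Proposition \ref{stimaD} is absorbed into one power of $|\log\epsilon|$ (this is exactly why the statement claims decay $|\log\epsilon|^{-(\alpha-1)}$ rather than $|\log\epsilon|^{-\alpha}$), and that all the separation estimates — hence the non‑singularity of both kernels — are available only for $t\le\bar T$, so the decomposition is stated on that time interval.
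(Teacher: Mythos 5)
Your proposal is correct and takes essentially the same approach as the paper: the identical splitting of $F^i_\epsilon$ into the planar Biot--Savart field $F^i_{\epsilon,1}$ and the remainder $F^i_{\epsilon,2}$, with the same separation estimate $|x-y|\ge R_m/2$ following from $\bar R<R_m/4$. The only cosmetic difference is that the paper obtains the smallness of $F^i_{\epsilon,2}$ by citing Proposition \ref{u e utilde}, whereas you go directly through Proposition \ref{stimaD} and note that the separation of the supports renders the logarithmic singularity (and hence the rearrangement argument) irrelevant; both routes give the same $C|\log\epsilon|^{-(\alpha-1)}$ bound.
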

\begin{proof}
 Let us define
$$
 \begin{aligned}
  F^i_{\epsilon, 1}(x,t) = & \sum_{j \neq i} \int dy \, K(x-y) \, \omega_{j, \epsilon}(y,t)\, , \\
  F^i_{\epsilon, 2}(x,t) = & \sum_{j \neq i} \int dy \, [G(x,y) - K(x-y)] \, \omega_{j, \epsilon}(y,t)\, .
 \end{aligned}
$$
It results that $F^i_{\epsilon, 1}$ is Lipschitz and bounded uniformly in $\epsilon$ by
the properties of $K$ outside the disk $\Sigma(0|R_m/2)$  (it is Lipschitz 
and bounded); in fact, if $x \in \Lambda_{i, \epsilon}(t)$ and $y \in \Lambda_{j, \epsilon}(t)$, for $i \neq j$, we have
\[ |x-y| \geq |z_i(t)-z_j(t)|- |x-z_i(t)|-|y-z_j(t)| \geq R_m -2 \bar R \geq \frac{R_m}{2} \]
for $4 \bar R < R_m$. The smallness of $F^i_{\epsilon, 2}$ is achieved by Proposition \ref{u e utilde}.
\end{proof}

We define now the center of vorticity and the moment of inertia,
\begin{equation}
 \begin{aligned}
&B^i_\epsilon(t):= a_i^{-1} \, \int_{\mathbb{R}^2} dx \, x \, \omega_{i,\epsilon}(x,t)  \\
&I^i_\epsilon(t):= \int_{\mathbb{R}^2} dx \, |x-B^i_\epsilon(t)|^2 |\omega_{i, \epsilon}(x,t)| 
\end{aligned}
\label{cen_vort}
\end{equation}
whose properties will be exploited in the following.

In the next lemmas we omit for simplicity the index $i$ from the notation and we assume, without lost of generality, $a_i=1$. 
This is equivalent to consider a ``reduced system" with only one vortex moving in an external field acting on it, which has the properties stated in Lemma \ref{comesonoFeps}. 
In this case it is easily  verified that the following equation holds:
\begin{equation} \label{eqderivaterid} \frac{\mathrm{d}}{\mathrm{d}t} \omega_t(f) = \omega_t \big[ ((u+F_\epsilon)\cdot \nabla) f + \partial_t f \big] \, . \end{equation}
The results we will prove hold obviously for each $i$.

\begin{lemma} \label{lem:Ieps Beps}
For $t \leq \bar T$ and for small enough $\epsilon$,
\begin{equation} \label{Iepsilon} I_\epsilon(t) \leq \frac{C}{|\log\epsilon|^{2(\alpha-1)}} \,.
 \end{equation}
\end{lemma}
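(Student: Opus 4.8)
\medskip
\noindent\textbf{Proof strategy.} The plan is to obtain a closed differential inequality for $I_\epsilon(t)$ and integrate it. We work in the reduced system, so $\omega_\epsilon(\cdot,t)\ge 0$ with $\int_{\mathbb R^2}\omega_\epsilon(x,t)\,dx=1$; in particular, by the definition \eqref{cen_vort} of $B_\epsilon$ one has $\int_{\mathbb R^2}\omega_\epsilon(x,t)\,(x-B_\epsilon(t))\,dx=0$. Writing $I_\epsilon(t)=\int|x|^2\omega_\epsilon(x,t)\,dx-|B_\epsilon(t)|^2$ and differentiating by means of the weak formulation \eqref{eqderivaterid} (applied to the test functions $x\mapsto|x|^2$ and $x\mapsto x$, suitably truncated outside a fixed ball containing $\mathrm{supp}\,\omega_\epsilon(\cdot,s)$ for all $s\le\bar T$ — legitimate since the support stays inside $\Sigma(z_i(s)| \bar R)$), together with $\dot B_\epsilon(t)=\int\omega_\epsilon\,(u+F_\epsilon)\,dx$, one finds
\[ \frac{\mathrm{d}}{\mathrm{d}t}I_\epsilon(t)=2\int_{\mathbb R^2}\omega_\epsilon(x,t)\,(x-B_\epsilon(t))\cdot\big(u(x,t)+F_\epsilon(x,t)\big)\,dx . \]

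Next I would dispose of the self-interaction. Replacing $u$ by $\widetilde u$ leaves an error $u-\widetilde u$, while the term containing $\widetilde u$ itself vanishes: writing $\widetilde u(x)=\int K(x-y)\,\omega_\epsilon(y)\,dy$ and symmetrizing in $x\leftrightarrow y$ turns the integrand into $\tfrac12\,\omega_\epsilon(x)\omega_\epsilon(y)\,(x-y)\cdot K(x-y)=0$, since $K(x-y)\perp(x-y)$. Splitting further $F_\epsilon=F_{\epsilon,1}+F_{\epsilon,2}$ as in Lemma \ref{comesonoFeps}, for the Lipschitz part I subtract its value at $B_\epsilon$ (using the vanishing first moment), so that
\[ \Big|\int\omega_\epsilon\,(x-B_\epsilon)\cdot F_{\epsilon,1}\,dx\Big|=\Big|\int\omega_\epsilon\,(x-B_\epsilon)\cdot\big(F_{\epsilon,1}(x)-F_{\epsilon,1}(B_\epsilon)\big)\,dx\Big|\le L\,I_\epsilon(t), \]
with $L$ the Lipschitz constant of $F_{\epsilon,1}$, uniform in $\epsilon$. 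For the two small fields I use Cauchy--Schwarz, $\int\omega_\epsilon|x-B_\epsilon|\,dx\le\big(\int\omega_\epsilon\big)^{1/2}\big(\int\omega_\epsilon|x-B_\epsilon|^2\big)^{1/2}=\sqrt{I_\epsilon(t)}$, together with Proposition \ref{u e utilde} and the bound $\|F_{\epsilon,2}\|_{L^\infty}\le C|\log\epsilon|^{-(\alpha-1)}$ of Lemma \ref{comesonoFeps}. Altogether this yields, for $t\le\bar T$,
\[ \frac{\mathrm{d}}{\mathrm{d}t}I_\epsilon(t)\le C\,I_\epsilon(t)+\frac{C}{|\log\epsilon|^{\alpha-1}}\,\sqrt{I_\epsilon(t)}\, . \]

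To conclude I would integrate this inequality. Setting $w(t):=\sqrt{I_\epsilon(t)}$ it becomes $\dot w\le \tfrac C2 w+\tfrac C2|\log\epsilon|^{-(\alpha-1)}$ wherever $w>0$, so by a standard comparison (Gr\"onwall) argument on the bounded interval $[0,\bar T]$ (recall $\bar T\le T_*$, a fixed constant),
\[ w(t)\le\Big(w(0)+\frac{1}{|\log\epsilon|^{\alpha-1}}\Big)\,e^{C\bar T}-\frac{1}{|\log\epsilon|^{\alpha-1}} . \]
By \eqref{in_data2} the initial vorticity is supported in a disk of radius $\epsilon$, hence $I_\epsilon(0)\le 4\epsilon^2$ and $w(0)\le2\epsilon\le|\log\epsilon|^{-(\alpha-1)}$ for small $\epsilon$; therefore $w(t)\le C|\log\epsilon|^{-(\alpha-1)}$ and $I_\epsilon(t)=w(t)^2\le C|\log\epsilon|^{-2(\alpha-1)}$ for $t\le\bar T$, which is \eqref{Iepsilon}.

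The only subtle points I anticipate are the localization needed to apply the weak formulation to the unbounded test function $|x-B_\epsilon(t)|^2$, which rests on the a priori confinement $\mathrm{supp}\,\omega_\epsilon(\cdot,t)\subseteq\Sigma(z_i(t)|\bar R)$ valid for $t\le\bar T$, and the treatment of the square-root differential inequality near $I_\epsilon=0$ (circumvented by passing to $w=\sqrt{I_\epsilon}$ and using that $I_\epsilon(0)$ is of order $\epsilon^2$). It is exactly this last step that upgrades the $|\log\epsilon|^{-(\alpha-1)}$ size of the error fields to the $|\log\epsilon|^{-2(\alpha-1)}$ bound on $I_\epsilon$: the errors enter multiplied by $\sqrt{I_\epsilon}$, so the exponent is effectively doubled after the Gr\"onwall estimate for $w$.
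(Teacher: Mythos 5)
Your proposal is correct and follows essentially the same route as the paper: the same differentiation of $I_\epsilon$ using $\dot B_\epsilon=\int\omega_\epsilon(u+F_\epsilon)$, the cancellation of the $\widetilde u$ contribution via the antisymmetry of $K$ and $(x-y)\cdot K(x-y)=0$, the splitting $F_\epsilon=F_{\epsilon,1}+F_{\epsilon,2}$ with the Lipschitz part recentered at $B_\epsilon$ and the small parts handled by Cauchy--Schwarz, leading to $|\dot I_\epsilon|\le C I_\epsilon + C|\log\epsilon|^{-(\alpha-1)}\sqrt{I_\epsilon}$ and a Gr\"onwall argument for $\sqrt{I_\epsilon}$ with $I_\epsilon(0)\le 4\epsilon^2$. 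No gaps; the observation that the error fields enter multiplied by $\sqrt{I_\epsilon}$, doubling the exponent, is exactly the mechanism in the paper's proof.
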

\begin{proof}
We estimate the derivative of $I_\epsilon(t)$, using \eqref{eqderivaterid}:
$$
\begin{aligned}
 \frac{\mathrm{d}}{\mathrm{d}t} I_\epsilon(t) = & 
 \int dx \, \omega_\epsilon(x, t) \, \left[ (u+F_\epsilon)\cdot 2(x-B_\epsilon(t)) - \dot{B}_\epsilon(t) \cdot 2 (x-B_\epsilon(t)) \right] .
 \end{aligned}
$$
 Moreover
 \[ \frac{\mathrm{d}}{\mathrm{d}t} B_\epsilon(t) = \int dx \, \omega_\epsilon(x,t) \, (u(x,t)+F_\epsilon(x,t)) , \]
 then
$$
\begin{aligned}
 \frac{\mathrm{d}}{\mathrm{d}t} I_\epsilon(t) = & 
\,  2 \int dx \, \omega_\epsilon(x,t) \left[ u(x,t)-\int dy \, \omega_\epsilon(y,t) \, u(y,t) \right] \cdot (x-B_\epsilon(t))  \\
 & \: + 2 \int dx \, \omega_\epsilon(x,t) \left[ F_\epsilon(x,t)-\int dy \, \omega_\epsilon(y,t) \, F_\epsilon(y,t) \right] \cdot (x-B_\epsilon(t)) .
 \end{aligned}
$$
Consider first  the term containing $F_\epsilon$: we note that, by the definition of $B_\epsilon(t)$,
\[ \int dx\, \omega_\epsilon(x,t) (x-B_\epsilon(t)) \cdot \int dy \, \omega_\epsilon(y,t) F_\epsilon (y,t) =0 \]
\[ \int dx\, \omega_\epsilon(x,t) (x-B_\epsilon(t)) \cdot F_{ \epsilon,1}(B_\epsilon(t), t) =0 \, . \]
We then obtain:
$$
\begin{aligned} 
& 2 \left| \,\int dx \, \omega_\epsilon(x,t) \left[ F_\epsilon (x,t)-\int_{\mathbb{R}^2} dy 
\, \omega_\epsilon(y,t) \, F_\epsilon (y,t) \right] \cdot (x-B_\epsilon(t)) \, \right| \\
 & = 2 \left| \, \int dx \, \omega_\epsilon(x,t)
 \left[ F_{\epsilon, 1}(x,t)- F_{\epsilon, 1} (B_\epsilon(t),t) \right] 
 \cdot (x-B_\epsilon(t)) \, \right| \\
 & \quad + 2 \left| \int dx \, \omega_\epsilon(x,t) \, F_{\epsilon, 2}(x, t) \cdot (x-B_\epsilon(t)) \, \right| \\
 & \leq 2\int dx \, \omega_\epsilon(x,t)\,  L |x-B_\epsilon(t)|^2 + \frac{C}{|\log\epsilon|^{\alpha-1}} \int dx \, |x-B_\epsilon(t)|\, \omega_\epsilon(x,t)  \\
 & \leq 2L \, I_\epsilon(t) + \frac{C}{|\log\epsilon|^{\alpha-1}} [I_\epsilon(t)]^{1/2}
 \end{aligned}
$$
where, in the last line, we used Cauchy-Schwarz inequality, and  $L$ is the Lipschitz constant
of $F_{\epsilon, 1}$.

\noindent For  the term containing $u$, we have analogously:
 \[ \int dx\, \omega_\epsilon(x,t) (x-B_\epsilon(t)) \cdot \int dy \, \omega_\epsilon(y,t) u(y,t) =0 .\]
Moreover, by the  antisymmetry of $K$,
\begin{equation}
 \int dx \, \omega_\epsilon(x,t) \, \widetilde{u}(x,t) = \int dx \int dy \: \omega_\epsilon(x,t)\,\omega_\epsilon(y, t) \, K(x-y) = 0 
\label{antisym}
\end{equation}
and recalling that, by definition, $(x-y) \cdot K(x-y) =0$, we get
$$
\begin{aligned} \int dx \, \omega_\epsilon(x,t) \, x \cdot \widetilde{u}(x,t) = & 
\int dx \int dy \, \omega_\epsilon(x,t)\,\omega_\epsilon(y, t) \, x \cdot K(x-y) \\
=& \int dx \int dy \, \omega_\epsilon(x,t)\,\omega_\epsilon(y, t) \, y \cdot K(x-y) \end{aligned}
$$
hence this integral is zero as well, by the antisymmetry of $K$.
Using Proposition \ref{u e utilde} we get then:
$$
\begin{aligned}
& 2 \left| \,\int dx \, \omega_\epsilon(x,t) \left[ u(x,t)-\int dy \, \omega_\epsilon(y,t) \, u(y,t) \right] \cdot (x-B_\epsilon(t)) \, \right|  \\
& \leq \, 2 \int dx \, \omega_\epsilon(x,t) \, \big|u(x,t)-\widetilde{u}(x,t)\big| \, |x-B_\epsilon(t)|\\\
& \leq \:  \frac{C}{|\log\epsilon|^{\alpha -1}} \int dx \, \omega_\epsilon(x,t) 
\, |x-B_\epsilon(t)| \leq \: \frac{C}{|\log\epsilon|^{\alpha -1}} \,  \left[I_\epsilon(t) \right]^{1/2}
\end{aligned}
$$
where Cauchy-Schwarz inequality has been used again in the last line. 

Hence we have
\[ |\dot{I}_\epsilon(t)| \leq 2L\,I_\epsilon(t) + \frac{C}{|\log\epsilon|^{\alpha -1}} \left[I_\epsilon(t) \right]^{1/2} . \]
Defining $M_\epsilon(t):= [I_\epsilon(t)]^{1/2}$, by Gronwall's inequality and using the fact  that,
by the initial data,  $I_\epsilon(0) \leq 4\epsilon^2$, we get
\begin{equation} 
 M_\epsilon(t) \leq \left( 2\epsilon + 
\frac{C}{2 L|\log\epsilon|^{\alpha -1}} \right) {\textnormal{e}}^{Lt} 
\leq  \frac{C}{|\log\epsilon|^{\alpha -1}} \, {\textnormal{e}}^{Lt} \, . 
\label{sqrtIepsilon}
\end{equation}
From the previous bound we finally obtain, recalling that $t\leq \bar T$,
\[ I_\epsilon(t) \leq \frac{C}{|\log\epsilon|^{2(\alpha -1)}} \, . \]
\end{proof}

\begin{lemma} \label{variazionemax}
Let us put
\begin{equation} 
 R_t := \max \{ |x-B_\epsilon(t)|: \: x \in \Lambda_\epsilon(t) \} 
\label{max_lambda} 
\end{equation} 
and choose $x_0 \in \Lambda_\epsilon(0)$ such that, at time $t \leq \bar T$, 
\begin{equation} 
 \frac34 R_t\leq|\phi_t(x_0)- B_\epsilon(t)| \leq R_t \, . 
\label{erret} 
\end{equation} 
Then at this time $t$ the following inequality holds:
\begin{equation} \label{eq:variazonemax} \frac{\mathrm{d}}{\mathrm{d}t}|\phi_t(x_0)- B_\epsilon(t)| \leq 2LR_t + \frac{C I_\epsilon(t)}{ R_t^3}
 + \sqrt{\frac{3\, M \epsilon^{-\gamma} m_\epsilon(R_t/2, t)}{\pi} } + \frac{C}{|\log\epsilon|^{\alpha -1}}
\end{equation} 
where the function $m_\epsilon$ is defined by:
\begin{equation} 
 m_\epsilon(R, t) := \int_{|y-B_\epsilon(t)|>R} dy \, \omega_\epsilon(y,t) \qquad \text{for }R \in (0, +\infty) \, .
\label{funzemme} 
\end{equation} 
\end{lemma}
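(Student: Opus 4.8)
The plan is to differentiate $s\mapsto|\phi_s(x_0)-B_\epsilon(s)|$ at the chosen time $t$. By \eqref{erret} we have $|\phi_t(x_0)-B_\epsilon(t)|\ge\tfrac34 R_t>0$ (if $R_t=0$ there is nothing to prove), so this map is differentiable there, and writing $v:=\phi_t(x_0)-B_\epsilon(t)$, $\nu:=v/|v|$, using $\dot\phi_t(x_0)=(u+F_\epsilon)(\phi_t(x_0),t)$ and $\dot B_\epsilon(t)=\int \mathrm{d}x\,\omega_\epsilon(x,t)\,(u+F_\epsilon)(x,t)$,
\[
\frac{\mathrm{d}}{\mathrm{d}t}|\phi_t(x_0)-B_\epsilon(t)|=\nu\cdot\Big[(u+F_\epsilon)(\phi_t(x_0),t)-\int \mathrm{d}x\,\omega_\epsilon(x,t)\,(u+F_\epsilon)(x,t)\Big].
\]
I would bound the bracket by splitting $u=\widetilde u+(u-\widetilde u)$, $F_\epsilon=F_{\epsilon,1}+F_{\epsilon,2}$ and using $\int\omega_\epsilon(\cdot,t)=1$; the $\widetilde u$ part collapses to $\nu\cdot\widetilde u(\phi_t(x_0),t)$ because $\int \mathrm{d}x\,\omega_\epsilon(x,t)\,\widetilde u(x,t)=0$ by the antisymmetry of $K$ (cf.\ \eqref{antisym}).

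Three of the four resulting pieces are routine. The $F_{\epsilon,1}$ piece equals $\int \mathrm{d}x\,\omega_\epsilon(x,t)\,\nu\cdot\big[F_{\epsilon,1}(\phi_t(x_0),t)-F_{\epsilon,1}(x,t)\big]$, and since $F_{\epsilon,1}$ is $L$-Lipschitz (Lemma \ref{comesonoFeps}) and $|\phi_t(x_0)-x|\le 2R_t$ for every $x\in\Lambda_\epsilon(t)$ (from \eqref{erret} and the definition of $R_t$), it is $\le 2LR_t$. The $F_{\epsilon,2}$ piece is $\le 2\|F_{\epsilon,2}\|_{L^\infty}\le C|\log\epsilon|^{-(\alpha-1)}$ by Lemma \ref{comesonoFeps}, and the $(u-\widetilde u)$ piece is $\le 2\sup_{\Lambda_\epsilon(t)}|u-\widetilde u|\le C|\log\epsilon|^{-(\alpha-1)}$ by Proposition \ref{u e utilde} (note $\phi_t(x_0)\in\Lambda_\epsilon(t)$). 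These produce the terms $2LR_t$ and $C|\log\epsilon|^{-(\alpha-1)}$ of \eqref{eq:variazonemax}.

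The substantive piece is the self-interaction term $\nu\cdot\widetilde u(\phi_t(x_0),t)=\int \mathrm{d}y\,\omega_\epsilon(y,t)\,\nu\cdot K(\phi_t(x_0)-y)$, which I would cut into the outer region $\{|y-B_\epsilon(t)|>R_t/2\}$ and its complement. On the outer region, $|\nu\cdot K(\phi_t(x_0)-y)|\le\big(2\pi|\phi_t(x_0)-y|\big)^{-1}$, and the rearrangement of Proposition \ref{u e utilde} applies: the outer restriction of $\omega_\epsilon(\cdot,t)$ is nonnegative, bounded by $3M\epsilon^{-\gamma}$ (Lemma \ref{norme di omega}) and of total mass $m_\epsilon(R_t/2,t)$, so the integral does not increase if we replace it by the function equal to $3M\epsilon^{-\gamma}$ on $\Sigma(\phi_t(x_0)\,|\,\rho)$ with $\pi\rho^2\,3M\epsilon^{-\gamma}=m_\epsilon(R_t/2,t)$ and zero elsewhere; since $\int_{\Sigma(0|\rho)}|y|^{-1}\mathrm{d}y=2\pi\rho$, the outer contribution is $\le 3M\epsilon^{-\gamma}\rho=\sqrt{3M\epsilon^{-\gamma}m_\epsilon(R_t/2,t)/\pi}$. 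On the inner region, \eqref{erret} forces $|\phi_t(x_0)-B_\epsilon(t)|\ge\tfrac34 R_t$, hence $|\phi_t(x_0)-y|\ge R_t/4$ and the entire segment joining $\phi_t(x_0)-y$ to $\phi_t(x_0)-B_\epsilon(t)$ keeps distance $\ge R_t/4$ from the origin; Taylor-expanding $y\mapsto K(\phi_t(x_0)-y)$ about $y=B_\epsilon(t)$ to second order, the zeroth-order term gives $\nu\cdot K(\phi_t(x_0)-B_\epsilon(t))=0$ since $K(\zeta)\cdot\zeta=0$, the first-order term — after using $\int \mathrm{d}y\,\omega_\epsilon(y,t)(y-B_\epsilon(t))=0$ to turn the inner integral of $(y-B_\epsilon(t))\omega_\epsilon$ into the negative of the outer one — is bounded by $\tfrac{C}{R_t^2}\int_{|y-B_\epsilon(t)|>R_t/2}|y-B_\epsilon(t)|\,\omega_\epsilon(y,t)\,\mathrm{d}y\le C I_\epsilon(t)/R_t^3$ (using $|\nabla K(\phi_t(x_0)-B_\epsilon(t))|\le C|\phi_t(x_0)-B_\epsilon(t)|^{-2}\le C R_t^{-2}$ and $|y-B_\epsilon(t)|\le\tfrac{2}{R_t}|y-B_\epsilon(t)|^2$ on the outer region), and the quadratic remainder is bounded by $\tfrac{C}{R_t^3}\int_{|y-B_\epsilon(t)|\le R_t/2}|y-B_\epsilon(t)|^2\,\omega_\epsilon(y,t)\,\mathrm{d}y\le C I_\epsilon(t)/R_t^3$ (using $|D^2K|\le C|\cdot|^{-3}$ with $|\cdot|\ge R_t/4$ along the segment). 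Summing the two regions gives $|\nu\cdot\widetilde u(\phi_t(x_0),t)|\le C I_\epsilon(t)/R_t^3+\sqrt{3M\epsilon^{-\gamma}m_\epsilon(R_t/2,t)/\pi}$, and collecting all four pieces proves \eqref{eq:variazonemax}.

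I expect this last split to be the delicate point: one has to use \emph{simultaneously} that $K$ is orthogonal to its argument and that $B_\epsilon(t)$ annihilates the first moment of $\omega_\epsilon$, so as to cancel the $O(1/R_t)$ and $O(1/R_t^2)$ contributions that would otherwise be far too large, while at the same time keeping the Taylor expansion clear of the singularity of $K$ — which is exactly why \eqref{erret} is stated with the buffer $\tfrac34 R_t\le|\phi_t(x_0)-B_\epsilon(t)|$ and why the cut is made at scale $R_t/2$; the outer part, which cannot be expanded, is instead absorbed into the $\sqrt{m_\epsilon(R_t/2,t)}$ term through the rearrangement (bathtub) inequality.
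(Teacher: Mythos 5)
Your proof is correct and follows essentially the same route as the paper: the same decomposition into the Lipschitz part $F_{\epsilon,1}$, the small parts $F_{\epsilon,2}$ and $u-\widetilde u$, and the self-interaction $\widetilde u$, with the same cut at $|y-B_\epsilon(t)|=R_t/2$, the same two cancellations ($K(\zeta)\cdot\zeta=0$ and the vanishing first moment about $B_\epsilon(t)$), and the same rearrangement for the outer mass. The only difference is cosmetic: for the inner region you use a second-order Taylor expansion of $K$ about $y=B_\epsilon(t)$ (legitimately, since the segment stays at distance $\geq R_t/4$ from the singularity), whereas the paper uses the exact identity $x'\cdot(x'-y')^{\perp}=-x'\cdot y'^{\perp}$ together with $|x'-y'|^{-2}-|x'|^{-2}=y'\cdot(2x'-y')\,|x'-y'|^{-2}|x'|^{-2}$; both yield the same bound $CI_\epsilon(t)/R_t^3$.
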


\begin{proof}
Let us put $x= \phi_t(x_0)$. We have:
$$
\begin{aligned} &\frac{\mathrm{d}}{\mathrm{d}t}|\phi_t(x_0)- B_\epsilon(t)| = 
[u(x,t)+F_\epsilon(x,t)- \dot{B}_\epsilon(t)] 
 \cdot \frac{x-B_\epsilon(t)}{|x-B_\epsilon(t)|} \\
 & \qquad = \left[ \int dy \, (F_\epsilon (x,t)-F_\epsilon (y,t)) \, \omega_\epsilon(y,t) \right] \cdot \frac{x-B_\epsilon(t)}{|x-B_\epsilon(t)|} \\
 & \qquad \quad + \left[ \int dy \, (u(x,t)-u(y,t)) \, \omega_\epsilon(y,t) \right] \cdot \frac{x-B_\epsilon(t)}{|x-B_\epsilon(t)|} .
\end{aligned}
$$
The term involving $F_\epsilon$ is easily bounded using Proposition \ref{comesonoFeps}, 
since $F_{\epsilon, 1}$ is Lipschitz and $F_{\epsilon, 2}$ is small:
\begin{equation}
\begin{aligned} \int dy \, |F(x,t)-F(y,t)| \, \omega_\epsilon(y,t) \leq & 
 L \int dy \, |x-y| \, \omega_\epsilon(y,t) +  \frac{C}{|\log\epsilon|^{\alpha -1}} \\
 \leq & 2L\, R_t + \frac{C}{|\log\epsilon|^{\alpha -1}} .
\end{aligned}
\label{f1}
\end{equation}
For the second integral containing the difference of the velocity field, we split it into three terms, 
recalling that \\ $\int dy\, \widetilde u(y)\, \omega_\epsilon(y) =0$:
\begin{equation}
 |u(x,t)- \widetilde{u}(x, t)| \leq \, \frac{C}{|\log\epsilon|^{\alpha -1}} , 
\label{f2}
\end{equation}
\begin{equation}
 \left|\int dy \, (u(y,t)-\widetilde{u}(y,t) )\, \omega_\epsilon(y,t) \right| \leq \,\frac{C}{|\log\epsilon|^{\alpha -1}} \, . 
\label{f3}
\end{equation}
The third (non trivial)  term is
\[ \widetilde{u}(x,t) \cdot \frac{x-B_\epsilon(t)}{|x-B_\epsilon(t)|} = 
\frac{x-B_\epsilon(t)}{|x-B_\epsilon(t)|} \cdot \int dy\, K(x-y) \omega_\epsilon(y,t) \, .\]
The integration domain can be decomposed into two regions: $A_1 := \Sigma\big(B_\epsilon(t) \big| R_t/2\big) $ and
$A_2:= \mathbb{R}^2 \setminus A_1\,$. We call $H_1$ and $H_2$ the resultant integrals. We follow
for $H_1$ the proof of \cite[Lemma~2.5]{butmar};
recalling \eqref{nucleoK} and the notation $x^\perp = (x_2,-x_1)$ for $x = (x_1,x_2)$, after introducing the new variables $x'=x-B_\epsilon (t)$,  $y'=y-B_\epsilon (t)$, and using that $x'\cdot (x'-y')^\perp=-x'\cdot y'^\perp$, we get,
\begin{equation}
\label{in H_11}
H_1 = \frac{1}{2\pi} \int_{|y'|\leq R_t/2}\! \rmd y'\, \frac{x'\cdot y'^\perp}{|x'||x'-y'|^2}\, \omega_\epsilon (y'+B_\epsilon (t))\;.
\end{equation}
By  definition of  center of vorticity \eqref{cen_vort}, $\int\! \rmd y'\,  y'^\perp\, \omega_\epsilon (y'+B_\epsilon (t)) = 0$, so that
\begin{equation}
\label{in H_13*}
H_1  = H_1'-H_1''\;, 
\end{equation}
where
\begin{eqnarray*}
&& H_1' = \frac{1}{2\pi}  \int_{|y'|\le R_t/2}\! \rmd y'\, \frac {x'\cdot y'^\perp}{|x'|}\, \frac {y'\cdot (2x'-y')}{|x'-y'|^2 \ |x'|^2} \, \omega_\epsilon (y'+B_\epsilon (t))\;, \\ && H_1''= \frac{1}{2\pi} \int_{|y'|> R_t/2}\! \rmd y'\, \frac{x'\cdot y'^\perp}{|x'|^3}\, \omega_\epsilon 
(y'+B_\epsilon (t))\;.
\end{eqnarray*}
From \eqref{erret} we have $|x'| \geq 3R_t/4 $, and hence $|y'| \le R_t/2$ implies $|x'-y'|\ge R_t/4$ and $|2x'-y'|\le |x'-y'|+|x'| \le  |x'-y'| +R_t \le 5|x'-y'|$, so that
\begin{equation*}
|H_1'|\leq \frac{C}{R_t^3}  \int_{|y'|\leq R_t/2} \! \rmd y'\, |y'|^2 \, \omega_\epsilon (y'+B_\epsilon (t)) \le \frac{C I_\epsilon (t)}{ R_t^3}\;.
\end{equation*}
To bound $H_1''$ we note that, in view of \eqref{max_lambda}, the integration is restricted to $|y'|\le R_t$, so that (using the lower bound in \eqref{erret})
\begin{equation*}
|H_1''| \le \frac{C}{R_t} \int_{|y'|> R_t/2}\! \rmd y'\, \omega_\epsilon (y'+B_\epsilon (t))\le \frac{C I_\epsilon (t)}{ R_t^3}\;,
\end{equation*}
where in the last inequality we used the Chebyshev's inequality. By \eqref{in H_13*} and the previous estimates we conclude that
\begin{equation}
\label{H_14}
|H_1| \le \frac{C I_\epsilon (t)}{ R_t^3}\;.
\end{equation}
We analyse now $H_2$. We first note that:
\[ |H_2| \leq \frac{1}{2\pi} \int_{|y-B_\epsilon(t)|>R_t/2} dy \, \frac{1}{|x-y|} \, \omega_\epsilon(y, t) \]
so we can bound $H_2$ using again rearrangement, as in the proof of Proposition \ref{u e utilde}; 
we bound the integral taking a vorticity concentrated, as much as possible, around the singularity of $\frac{1}{|x-y|}$. 
Therefore, the rearrangement is achieved defining $\omega_\varepsilon$ equal to
$3\,M \epsilon^{-\gamma}$ for $|x-y|<r$, and equal to $0$ for $|x-y| \geq r$, 
where $r$ is chosen such that $3 \, M \epsilon^{-\gamma} \cdot \pi r^2= m_\epsilon(R_t/2,t)$ 
(which is the ``total mass'' of $\omega_\epsilon$ in the integration domain $A_2$). Then
$$
\begin{aligned}
|H_2| \leq & \: \frac{3\, M \epsilon^{-\gamma}}{2 \pi} \int_{|z|<r} \frac{dz}{z} = 3\, M \epsilon^{-\gamma} \int_0^r \frac{\rho}{\rho} \, d\rho 
= 3\, M \epsilon^{-\gamma} \, r = \sqrt{\frac{3\,M m_t(R_t/2)}{\pi \, \epsilon^\gamma}} .
\end{aligned}
$$
Collecting this last estimate, \eqref{f1}, \eqref{f2}, \eqref{f3}, and \eqref{H_14},  we obtain \eqref{eq:variazonemax}.
\end{proof}

We investigate now the behavior near to $0$ of the function
$m_\epsilon (\cdot, t)$ introduced in \eqref{funzemme}.

\begin{lemma} \label{lem:mt}
There exists $T>0$ such that,
for each $\ell>0$,
\begin{equation}
 \lim_{\epsilon \rightarrow 0} \epsilon^{-\ell}m_\epsilon\left(\frac{1}{|\log\epsilon|^k}, t\right) =0  
\label{lemma_massa}
\end{equation}
for any $t\in [0, T]$ and $k =(\alpha-2)/2$.
\end{lemma}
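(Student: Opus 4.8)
The statement being a quantitative concentration estimate, the plan is to control the vorticity mass outside a disk centred at the centre of vorticity by a differential inequality and then to bootstrap it. I work in the reduced one‑vortex notation ($a=1$) and stay on $[0,\bar T]$, so that $\mathrm{supp}\,\omega_\epsilon(s)\subset\Sigma(B_\epsilon(s)\,|\,2\bar R)$. Given $R$ and a smooth profile $W$ with $W(\rho)=0$ for $\rho\le R$, $W(\rho)=1$ for $\rho\ge 2R$ and $0\le W'\le C/R$, set $\mu(t):=\int W(|x-B_\epsilon(t)|)\,\omega_\epsilon(x,t)\,dx$, so that $m_\epsilon(2R,t)\le\mu(t)\le m_\epsilon(R,t)$. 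Differentiating with the weak formulation \eqref{eqderivaterid}, only the annulus $R\le|x-B_\epsilon(t)|\le 2R$ contributes, and on it the radial component of $u+F_\epsilon-\dot B_\epsilon$ is estimated exactly as in the proofs of Lemma \ref{comesonoFeps}, Proposition \ref{u e utilde} and Lemma \ref{variazionemax} (the $F_\epsilon$ part via the Lipschitz/smallness split — with the centre‑to‑support error controlled by $\sqrt{I_\epsilon}$ — the $u$ part via $u-\widetilde u$ small and the $H_1$--$H_2$ decomposition of $\widetilde u$). Using $\int_{R\le|x-B_\epsilon|\le 2R}\omega_\epsilon\le m_\epsilon(R,s)$ and $m_\epsilon(R,0)=0$ for $R\ge 2\epsilon$, one gets, for $\epsilon$ small and $|\log\epsilon|^{-k}\le R\le\bar R$,
\begin{equation}\label{fluxmt}
m_\epsilon(2R,t)\ \le\ \int_0^t\frac{C}{R}\left(2LR+\frac{C}{|\log\epsilon|^{\alpha-1}}+\frac{C\,I_\epsilon(s)}{R^{3}}+\sqrt{\frac{3M\,\epsilon^{-\gamma}\,m_\epsilon(R/2,s)}{\pi}}\,\right)m_\epsilon(R,s)\,ds ,
\end{equation}
together with the analogue in which $(R,2R)$ is replaced by a thin annulus $(R',R'')$, $R'<R''$, and the prefactor $C/R$ by $C/(R''-R')$; this is closely parallel to \eqref{eq:variazonemax}, with the extra factor $C/R$ from $\|W'\|_\infty$.

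The base case is Chebyshev's inequality together with Lemma \ref{lem:Ieps Beps}: $m_\epsilon(R,s)\le I_\epsilon(s)/R^{2}\le C|\log\epsilon|^{-2(\alpha-1)}R^{-2}$, hence $m_\epsilon(R,s)\le C|\log\epsilon|^{-\alpha}$ for every $R\gtrsim|\log\epsilon|^{-k}$ and $s\le\bar T$. Here $\alpha>2$ first matters, both because $2(\alpha-1)>\alpha$ and because it makes $C\,I_\epsilon/R^{3}$ and $C|\log\epsilon|^{1-\alpha}$ in \eqref{fluxmt} bounded — in fact infinitesimal — on the admissible range of $R$. One then fixes $\ell>0$, picks a small $T\le\bar T$ depending only on the structural constants, and an integer $n=n(\ell)$, and iterates \eqref{fluxmt} over a finite family of scales comparable to $|\log\epsilon|^{-k}$ on the interval $[0,T]$: at the $j$‑th step the bound obtained at the previous scale is inserted into the feedback term $\sqrt{\epsilon^{-\gamma}m_\epsilon(\cdot/2,s)}$, which becomes harmless once that bound is below the threshold $\sim\epsilon^{\gamma}|\log\epsilon|^{-2k}$; the smallness of $T$ turns the remaining, bounded, part of the coefficient into a contraction, and, together with $m_\epsilon(\cdot,0)=0$ to start the recursion, the superlinear dependence $\epsilon^{-\gamma/2}m_\epsilon^{3/2}$ of the flux on the mass yields a genuine improvement at each step. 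After $n$ steps $m_\epsilon(|\log\epsilon|^{-k},t)\le\epsilon^{\ell+1}$ for all small $\epsilon$ and $t\in[0,T]$, i.e.\ $\epsilon^{-\ell}m_\epsilon(|\log\epsilon|^{-k},t)\to 0$. The restriction to bounded times is inherent to the $T$‑contraction, exactly as in the iterative method of \cite{butmar2}.

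The hard part is closing this recursion quantitatively: one must keep, simultaneously and at every step, (a) $T$ times the bounded part of the coefficient in \eqref{fluxmt} strictly below $1$; (b) the amplification $\sim|\log\epsilon|^{k}$ of the thin annulus — which reappears, squared, in front of $\epsilon^{-\gamma}$ in the superlinear term — dominated by the gains, a balance that singles out $k=(\alpha-2)/2$ and is the technical reason for the hypothesis $\alpha>2$ alluded to around \eqref{qui}; and (c) the accumulated step constants (a product of $O(1)$ factors, plus harmless powers of $|\log\epsilon|$) from spoiling the final bound. I expect (b) — arranging that the gain at each scale genuinely improves the power of $\epsilon$, not merely the constant, against the combined loss from $\epsilon^{-\gamma}$ and from contracting the radius all the way down to $|\log\epsilon|^{-k}$ — to be the main obstacle.
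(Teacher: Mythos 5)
There is a genuine gap, and it sits exactly where you predicted: the recursion cannot be closed. Your scheme imports the velocity bound of Lemma \ref{variazionemax}, whose self--interaction part is controlled by the rearrangement term $\sqrt{\epsilon^{-\gamma}m_\epsilon(R/2,s)}$, and then hopes that the resulting superlinear flux $\epsilon^{-\gamma/2}m_\epsilon^{3/2}$ produces a gain of a power of $\epsilon$ at each of finitely many scales. But your base case is only $m_\epsilon\leq C|\log\epsilon|^{-\alpha}$ (Chebyshev plus Lemma \ref{lem:Ieps Beps}), which is logarithmically, not polynomially, small; hence $\epsilon^{-\gamma/2}\sqrt{m_\epsilon}\sim\epsilon^{-\gamma/2}|\log\epsilon|^{-\alpha/2}$ blows up, and the ``feedback'' term is an enormous loss, not a gain, at every step until $m_\epsilon$ is already below $\epsilon^{\gamma}$ --- which is precisely what you are trying to prove. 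Discarding that term, what remains of \eqref{fluxmt} is a linear integral inequality with an $O(1)$ coefficient (the $2L$ from $\tfrac{C}{R}\cdot 2LR$), so each halving of the radius multiplies the bound by $O(T)$: after a fixed number $n(\ell)$ of steps you obtain $m_\epsilon\leq (CT)^{n}$, a constant independent of $\epsilon$. No finite iteration of such an inequality can ever produce a bound that is $o(\epsilon^{\ell})$, let alone for every $\ell$.

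The paper's proof differs in the two points that make it work. First, the flux of $\widetilde u$ through the annulus is \emph{not} estimated by rearrangement: writing it as a symmetrized double integral and using the vanishing of $\int y'^{\perp}\omega$ together with Chebyshev, one gets the purely linear bound $|H_3|\leq C I_\epsilon(t)\,m_\epsilon(R,t)/(Rh^{3})$, so no $\epsilon^{-\gamma}$ ever enters and the inequality $\frac{d}{dt}\mu_t(R,h)\leq A_\epsilon(R,h)\,m_\epsilon(R,t)$ is linear in the mass. Second, the iteration is performed $n=\lfloor|\log\epsilon|\rfloor$ times over \emph{thin} nested annuli of width $h=R_0/(2n)$, all contained in $[\tfrac12|\log\epsilon|^{-k},|\log\epsilon|^{-k}]$; since the initial support has radius $\epsilon$, every $\mu_0(R_j,h)$ vanishes, and the only surviving term is $(C|\log\epsilon|\,t)^{n}/n!\leq(Ct)^{|\log\epsilon|}$, which is super--polynomially small in $\epsilon$ for $t\leq T$ with $CT<1$. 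The thinness $h\sim|\log\epsilon|^{-k-1}$ costs the factor $1/(Rh^{3})$ in $A_\epsilon$, and it is exactly here --- in keeping $I_\epsilon/(Rh^{3}|\log\epsilon|^{0})\sim|\log\epsilon|^{4k+3-2(\alpha-1)}\leq|\log\epsilon|$ --- that $k=(\alpha-2)/2$ and $\alpha>2$ are forced (eq. \eqref{qui}), not in balancing a superlinear term. To repair your argument you would need to replace the $H_2$--type rearrangement bound by the moment--of--inertia bound and let the number of iterations grow like $|\log\epsilon|$ so that the factorial supplies the decay.
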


\begin{proof}
Given $R\ge 2h>0$, let $W_{R,h}(x)$, $x\in \mathbb{R}^2$, be a non-negative smooth function, depending only on $|x|$, such that
\begin{equation}
\label{W1}
W_{R,h}(x) = \begin{cases} 1 & \text{if $|x|\le R$}, \\ 0 & \text{if $|x|\ge R+h$}, \end{cases}
\end{equation}
and, for some $C>0$,
\begin{equation}
\label{W2}
|\nabla W_{R,h}(x)| < \frac{C}{h}\,,
\end{equation}
\begin{equation}
\label{W3}
|\nabla W_{R,h}(x)-\nabla W_{R,h}(x')| < \frac{C}{h^2}\,|x-x'|\,. 
\end{equation}

We define the quantity
\begin{equation}
\label{mass 1}
\mu_t(R,h) = \int\! \rmd x \, \big[1-W_{R,h}(x-B_\epsilon(t))\big]\, \omega_\epsilon (x,t)\,,
\end{equation}
which is a mollified version of $m_\epsilon$, satisfying
\begin{equation}
\label{2mass 3}
\mu_t(R,h) \le m_\epsilon(R, t) \le \mu_t(R-h,h)\,.
\end{equation}
In particular, it is enough to prove the claim with $\mu_t$ instead of $m_\epsilon$. 

\noindent The convenience is that the function $\mu_t$ is differentiable (with respect to $t$); therefore we compute its derivative:
$$
\begin{aligned} \frac{\mathrm{d}}{\mathrm{d}t} \mu_t(R,h) = & -
\int dx \: \nabla W_{R,h}(x-B_\epsilon(t)) \cdot [ u(x,t)+F_\epsilon(x,t)-\dot{B}_\epsilon(t) ] \, \omega_\epsilon(x,t) \\ = & -H_3- H_4 - H_5 
\end{aligned}
$$
with
$$
\begin{aligned}
 H_3= & \int dx \: \nabla W_{R, h}(x-B_\epsilon(t)) \cdot \widetilde{u}(x,t) \, \omega_\epsilon(x,t) \\
 H_4= & \int dx \: \nabla W_{R, h}(x- B_\epsilon(t)) \cdot \left[F_{\epsilon, 1}(x,t) - 
\int dy\,  F_{\epsilon, 1}(y,t) \, \omega_\epsilon(y,t)\right] \, \omega_\epsilon(x,t) \\
 H_5= & \int dx \: \nabla W_{R, h}(x-B_\epsilon(t))\\
 & \quad \quad  \cdot \bigg[u(x,t)- \widetilde{u}(x,t) -\int dy \, [u(y,t)- \widetilde{u}(y,t)] \, \omega_\epsilon(y,t) \bigg] \, \omega_\epsilon(x,t) \\
+ & \int dx \: \nabla W_{R, h}(x-B_\epsilon(t)) \cdot \left[F_{\epsilon, 2}(x,t) -
\int dy \, F_{\epsilon, 2} (y, t) \, \omega_\epsilon (y, t) \right] \, \omega_\epsilon(x, t) 
\end{aligned}
$$
since $\dot{B}_\epsilon(t) = \int dy \, \omega_\epsilon(y,t) [ F_\epsilon (y,t)+ u(y,t)- 
\widetilde{u}(y,t)] $. We immediately observe that, thanks to Proposition \ref{u e utilde}, to Lemma \ref{comesonoFeps}
and to the fact that $\nabla W_{R, h}(z)$ is zero if $|z| \leq R$, 
\begin{equation} \label{H5} |H_5| \leq \frac{C}{h} \cdot \frac{C}{|\log\epsilon|^{\alpha-1}} \cdot m_\epsilon(R,t) \, . \end{equation}
Following the proof of \cite[Proposition~3.4]{butmar2}  we find (we postpone the estimates
of $|H_3|$ and $|H_4|$ in the Appendix):
\begin{equation} \label{H3} |H_3| \leq \frac{C}{R\, h^3} I_\epsilon(t) \, m_\epsilon(R,t) \, ; 
\end{equation}
\begin{equation} \label{H4} |H_4| \leq C  \left( 1 + \frac{2R}{h}  \right)  m_\epsilon(R,t)
+  \frac{C \, I_\epsilon(t)}{R^2 \, h} m_\epsilon(R,t) \, .
\end{equation}

Recalling \eqref{Iepsilon}, from estimates \eqref{H5}, \eqref{H3},  and \eqref{H4},  we have:
\begin{equation}
 \frac{\mathrm{d}}{\mathrm{d}t} \mu_t (R,h) \leq A_\epsilon(R, h) m_\epsilon(R,t)  
\label{equ_mm}
\end{equation}
for any $t\leq \bar T$, where
$$
A_\epsilon(R, h)=
C
\left(\frac{1}{R \, h^3 \, |\log\epsilon|^{2(\alpha-1)}} + \frac{1}{R^2 \, h \, |\log\epsilon|^{2(\alpha-1)}}+ 
\frac{1}{h |\log\epsilon|^{\alpha-1}} + \frac{2R}{h} + 1 \right) .
$$
Therefore, by \eqref{2mass 3} and \eqref{equ_mm},
\begin{equation}
\mu_t (R,h) \leq \mu_0 (R,h) + A_\epsilon(R, h) \int_0^t {\mathrm{d}} s  \, \mu_s (R-h, h)
\end{equation}
for any $t\leq \bar T$ and $\epsilon$ sufficiently small.
We iterate the last inequality $n=\lfloor|\log\epsilon|\rfloor$  times (where $\lfloor a\rfloor$
denotes the integer part of the positive number $a$), from
$$
R_0 =\frac{1}{|\log\epsilon|^{k}} \qquad \textnormal{to} \qquad R_n=\frac{1}{2|\log\epsilon|^{k}} \, ,
$$
where $R_n = R_0 -n h$,  and consequently
$$
h= \frac{1}{2  n |\log\epsilon|^{k}} \, .
$$
In this range for $R$ the quantity $A_\epsilon(R, h)$ is bounded by $C |\log\epsilon |$, in fact
\begin{equation}
\frac{2 R}{h} \leq C \frac{|\log\epsilon|^{k +1}}{|\log\epsilon|^{k}} \, ,
\end{equation}
\begin{equation}
\frac{1}{h |\log\epsilon|^{\alpha -1}} \leq C \frac{|\log\epsilon|^{k +1}}{|\log\epsilon|^{\alpha -1}} \leq C |\log\epsilon|
\end{equation}
since ${k}=(\alpha-2)/2 < \alpha -1$,
\begin{equation}
\frac{1}{R^2 \, h \, |\log\epsilon|^{2(\alpha-1)}}
\leq C \frac{|\log\epsilon|^{3{k}+1}}{|\log\epsilon|^{2(\alpha -1)}} \leq C |\log\epsilon|
\end{equation}
since $3{k} < 2(\alpha -1)$,
\begin{equation}
\frac{1}{R \, h^3 \, |\log\epsilon|^{2(\alpha-1)}} \leq C
\frac{|\log\epsilon|^{4{k}+3}}{|\log\epsilon|^{2(\alpha -1)}} \leq C |\log\epsilon|
\label{qui}
\end{equation}
since $4{k}+2= 2(\alpha -1)$. Note that in this point we need to choose $k=(\alpha-2)/2$
and $\alpha >2$.

\noindent Then, for any $t\in [0, \bar T]$,   it results  $A_\epsilon(R, h) \leq C |\log\epsilon |$  and
 
\[
\begin{split}
\mu_t(R_0-h,h) & \le \mu_0(R_0-h,h) + \sum_{j=1}^{n-1} \mu_0(R_j,h) \frac{(C |\log\epsilon| t)^j}{j!} \\ & \quad + \frac{(C |\log\epsilon| )^{n}}{(n-1)!} \int_0^t\!{\textnormal{d}} s\,  (t-s)^{n-1}\mu_s(R_{n},h) \,.
\end{split}
\]
Since $\Lambda_\epsilon(0) \subset \Sigma(z|\epsilon)$, we can determine $\epsilon$
so small such that $\mu_0(R_j,h)=0$ for any $j=0,\ldots,n$, so that, for any $t\in [0,\bar T]$,
\begin{equation}
\label{mass 15'}
\mu_t(R_0-h,h) \le \frac{(C |\log\epsilon|)^{n}}{(n-1)!} \int_0^t\!{\textnormal{d}} s\,  (t-s)^{n-1}\mu_s(R_{n},h) \le  \frac{(C |\log\epsilon| t)^{n}}{n!}\,,
\end{equation}
where the obvious estimate $\mu_s(R_{n},h) \le 1$ has been used in the last inequality. In conclusion, using also \eqref{2mass 3}, 
\[
m_\epsilon(R_0, t) \le \mu_t(R_0 -h,h) \le  \left(C t\right)^{\lfloor|\log\epsilon|\rfloor} \;\; \forall\, t\in [0,\bar T] ,
\]
which implies  \eqref{lemma_massa} for $t\leq T$ and  $T$ suitably small.
\end{proof}

We are now ready to prove Theorem \ref{teoEA}

\begin{proof}[Proof of Theorem \ref{teoEA}]
With the previous results we can prove now that, for all $t\in[0,T]$,
\begin{equation}
\label{lamb}
\Lambda_\eps(t)\subseteq \Sigma \left(B_\eps (t) | \, C |\log\eps|^{-k} \right)\, .
\end{equation}
Recalling  the definiton of $R_t$ given in \eqref{max_lambda}, if at time $t\in[0,T]$
we have, for a certain $x_0\in \Lambda_\eps(0)$,
\begin{equation}
\frac34 R_t \leq |\phi_t(x_0)- B_{\eps}(t)|\leq R_t
\label{corona}
\end{equation}
then the time derivative of $|\phi_t(x_0)-B_{\eps}(t)|$ is bounded by \eqref{eq:variazonemax}, 
that is (considering also \eqref{Iepsilon}) 
\begin{equation}
\begin{split}
\frac{\mathrm{d}}{\mathrm{d}t}& |\phi_t(x_0)-B_\epsilon(t)|   \leq \\
&2L\, R_t +  \frac{C}{R_t^{3}|\log\epsilon|^{2(\alpha-1)}}  + C \sqrt{\epsilon^{-\gamma}m_\epsilon(R_t/2, t)} + \frac{C}{|\log\epsilon|^{\alpha-1}}
\end{split}
\label{max_sup}
\end{equation}
for each index $i$ of the $N$ vortices, omitted to simplify the notation.

\noindent Let $t_0$ be the first time at which  ${R_{t_0} = |\log\eps|^{-k}}$  and   ${R_{t} \geq |\log\eps|^{-k}}$   for $t\geq t_0$;
of course if such $t_0$ does not exist \eqref{lamb} is already achieved, as well as  if there are
time intervals after $t_0$ for which ${R_{t} \leq |\log\eps|^{-k}}$, \eqref{lamb} is achieved
in these time intervals.

\noindent In the worst case in which a fluid particle fulfills \eqref{corona}
in the whole time interval $[t_0, T]$, then we can bound the right hand side of
\eqref{max_sup}  in the following way,
\begin{equation}
\frac{\mathrm{d}}{\mathrm{d}t} |\phi_t(x_0)-B_\epsilon(t)|   \leq 
C |\phi_t(x_0)-B_\epsilon(t)|  \, ,
\label{eq_phi}
\end{equation}
since the other terms are negligible with respect to the first one, by 
Lemma \ref{lem:mt} 
(which holds for $R_t\geq |\log\epsilon|^{-k}$) with $\ell>\gamma$  and by the following
$$
\frac{C}{R_t^{3}|\log\epsilon|^{2(\alpha-1)}} \leq C R_t
\qquad
\iff
\qquad
R_t \geq \frac{C}{|\log\epsilon|^\frac{\alpha -1}{2}} 
$$

$$
\frac{C}{|\log\epsilon|^{\alpha-1}}\leq C R_t  \qquad
\iff
\qquad
R_t \geq \frac{C}{|\log\epsilon|^{\alpha -1}} \, ,
$$
which hold true since $R_t\geq |\log\epsilon|^{-k}$ and $k=(\alpha-2)/2$.
From \eqref{eq_phi} we immediatly obtain 
\begin{equation}
|\phi_t(x_0)-B_\epsilon(t)| \leq |\phi_{t_0}(x_0)-B_\epsilon(t_0)| {\rm{e}}^{C (t-t_0)}
\leq  |\log\eps|^{-k} {\rm{e}}^{C (t-t_0)}
\end{equation}
hence $|\phi_t(x_0)-B_\epsilon(t)| \leq C |\log\eps|^{-k}$  for any fluid particle satisfying \eqref{corona}  
$\forall \, t\in [t_0,T]$.
We  can now deduce the same bound for any fluid particle in $\Lambda_\eps(t)$, that is
\begin{equation} \label{suppRt} 
\Lambda_{\epsilon}(t)  \subseteq \Sigma(B_\epsilon(t)| C |\log\eps|^{-k}) \, , 
\end{equation}
or equivalently $R_t \leq C |\log\eps|^{-k}$.
Suppose in fact  that for a  fluid particle, which does not satisfy  \eqref{corona} for $t=t_0$,
the quantity $|\phi_t(x_0)- B_{\eps}(t)|$ reaches at a certain
$t^*>t_0$  the value $3R_{t^*}/4$;  at this time $t^*$  the quantity $|\phi_{t^*}(x_0)- B_{\eps}(t^*)|$
is clearly bounded by the analogous quantity for a fluid particle which satisfies \eqref{corona}
for any $t \in [t_0, T]$.  If the first fluid particle for $t\geq t^*$ enters  the region  \eqref{corona}, then  $|\phi_t(x_0)- B_{\eps}(t)|$  has a 
time derivative which is bounded by \eqref{eq_phi}, hence its
successive growth is controlled by $C |\log\eps|^{-k}$.

\noindent If there is not a single fluid particle which satisfies \eqref{corona} for any $t\in[t_0, T]$,
the same argument can be applied in subintervals $[t_0, t_1]$, $[t_1, t_2], \dots 
,[t_n, T]$, in any of which certainly there is a fluid particle for which \eqref{corona} holds.
Therefore for \textit{any} fluid particle in $\Lambda_\eps(t)$   we have  $|\phi_t(x_0)- B_{\eps}(t)|\leq C |\log\eps|^{-k}$,
that is  $R_t\leq C |\log\eps|^{-k}$. Thus \eqref{lamb} is proved.

It remains to prove that
\begin{equation}
\label{B-z}
|B^i_\epsilon(t)-z_i(t)|\leq \frac{C}{|\log\epsilon|^{k}} \quad  {\textnormal{for each}}\,\, i \, ,
\end{equation}
since with this bound, together with \eqref{lamb},  Theorem \ref{teoEA} follows.
In this last point we reintroduce the index $i$.

\noindent We have:
$$
\begin{aligned}
\dot{B}^i_{\epsilon}(t) - \dot{z}_i(t)  = & \: a_i^{-1} \int dx \, \big(u^i(x, t)+ F^i_\epsilon(x,t) \big) \, \omega_{i, \epsilon}(x,t) \\
& \: - \sum_{j \neq i} a_j K(z_i(t) - z_j(t))\, .
\end{aligned}
$$
We add and subtract appropriate terms, then by the the splitting  of $ F^i_\epsilon$ given
in Lemma \ref{comesonoFeps}  we get:
$$
\begin{aligned}
\dot{B}_{\epsilon}^i(t) - \dot{z}_i(t)= & \:a_i^{-1} \int dx \, u^i(x,t) \, \omega_{i, \epsilon}(x, t) 
+ a_i^{-1}  \int dx \, F^i_{\epsilon,2} (x, t) \, \omega_{i, \epsilon}(x, t) \\
& \: + a_i^{-1} \int dx [F^i_{\epsilon,1}(x,t) - F^i_{\epsilon,1}(B^i_{\epsilon}(t), t) ] \, \omega_{i, \epsilon} (x, t) \\
& \: + \sum_{j \neq i} \int dy \, [K(B^i_{\epsilon}(t)-y)- K(B^i_{\epsilon}(t)- B^j_{\epsilon}(t)) ] \, \omega_{j, \epsilon} (y, t) \\
& \: + \sum_{j \neq i} a_j [K(B^i_{\epsilon}(t)- B^j_{\epsilon}(t)) -K(B^i_{\epsilon}(t) -z_j(t) ) ] \\
& \: + \sum_{j \neq i} a_j [ K(B^i_{\epsilon}(t) -z_j(t) ) - K(z_i(t) - z_j(t)) ] \, .
\end{aligned}
$$
The first term on the right hand side is controlled by adding and subtracting 
$\widetilde u^i(x,t)$ inside the integral, using Proposition \ref{u e utilde} and
\eqref{antisym}. For the other terms we use Lemma \ref{comesonoFeps} and the Lipschitz property of $K$ outside the disk 
$\Sigma \big(0 \big| R_{\mathrm{min}}/2 \big)$ (we call $L_1$ the Lipschitz constant of $K$ in this region) obtaining
$$
\begin{aligned}
| \dot{B}^i_{\epsilon}(t) - \dot{z}_i(t) | \leq & \:  |a_i|^{-1} \frac{C}{|\log \epsilon|^{\alpha-1}}  + |a_i|^{-1} L \, I^i_{\epsilon}(t)^{1/2} \, |a_i|^{1/2} \\
& + \sum_{j \neq i} L_1 I^j_{\epsilon}(t)^{1/2} \, |a_j|^{1/2}
 + \sum_{j \neq i} |a_j| \, L_1 |B^j_{\epsilon}(t) -z_j(t) | 
 \\ & \: 
+ \sum_{j \neq i} |a_j| \, L_1 |B^i_{\epsilon}(t) -z_i(t) | \, .
\end{aligned}
$$
Defining $\Delta(t) := \max_{i=1, \ldots, N} |B^i_{\epsilon}(t) -z_i(t) |$, then
$$
\begin{aligned} \dot{\Delta}(t) \leq & \max_{i=1, \ldots, N} | \dot{B}^i_{\epsilon}(t) - \dot{z}_i(t) | \\
\leq & \, \frac{C}{|\log \epsilon|^{\alpha-1}} + C \sum_{j=1}^N \sqrt{I^j_{\epsilon}(t)} + 2L_1 \sum_{j=1}^N |a_j| \: \Delta(t) \, .
\end{aligned}
$$
By definition of $F^i_{\epsilon, 1}$ it can be immediatly seen that $L \geq L_1 \sum_j |a_j|$. 
By integration of the previous inequality we obtain:
\[ \Delta(t) \leq  \Delta(0) {\textnormal{e}}^{2Lt} + C \int_0^t \sum_{j=1}^N \sqrt{I^j_\epsilon(s)} {\textnormal{e}}^{L(t-s)} \, ds + \frac{C}{|\log \epsilon|^{\alpha-1}} \cdot ({\textnormal{e}}^{2Lt}-1) \, . \]
Using the bound \eqref{sqrtIepsilon} we get, for $t\leq T$,
\begin{equation} \label{Deltat} \Delta(t) \leq \frac{C}{|\log \epsilon|^{\alpha-1}} \,  ,
\end{equation}
so \eqref{B-z} is achieved.
This concludes the proof of Theorem \ref{teoEA}.    
\end{proof}

\section*{Appendix}
In this appendix we derive estimates \eqref{H3} and \eqref{H4}.
\begin{equation*}
\begin{split}
H_3 & = \int\! d x\, \nabla W_{R,h}(x-B_\eps(t)) \cdot \int\! d y \, K(x-y)\, \omega_\eps(y,t)\, \omega_\eps(x,t)  \\ & = \frac 12 \int\! d x \! \int\! d y\, [\nabla W_{R,h}(x-B_\eps(t)) - \nabla W_{R,h}(y-B_\eps(t))] \\ & \qquad  \cdot K(x-y) \, \omega_\eps(x,t)\,  \omega_\eps(y,t) \\ 
H_4 & =   \int\! d x\, \nabla W_{R,h}(x-B_\eps(t)) \cdot \int\! d y \,[F_{\eps, 1}(x,t)-F_{\eps, 1}(y,t)]\, \omega_\eps(y,t)\, \omega_\eps(x,t)\,, 
\end{split}
\end{equation*}
where the second expression of $H_3$ is due to the antisymmetry of $K$.

Concerning $H_3$, we introduce the new variables $x'=x-B_\eps(t)$, $y'=y-B_\eps(t)$, define $\tilde\omega_\eps(z,t) := \omega_\eps(z+B_\eps(t),t)$, and let
\[
f(x',y') = \frac 12 \tilde\omega_\eps(x',t)\, \tilde\omega_\eps(y',t) \, [\nabla W_{R,h}(x')-\nabla W_{R,h}(y')] \cdot K(x'-y') \,,
\]
so that $H_3 = \int\! d x' \! \int\! d y'\,f(x',y')$. We observe that $f(x',y')$ is a symmetric function of $x'$ and $y'$ and that, by \eqref{W1}, a necessary condition to be different from zero is if either $|x'|\ge R$ or $|y'|\ge R$. Therefore, 
\begin{equation*}
\begin{split}
H_3  &= \bigg[ \int_{|x'| > R}\! d x' \! \int\! d y' + \int\! d x' \! \int_{|y'| > R}\! d y' -  \int_{|x'| > h}\! d x' \! \int_{|y'| > R}\! d y'\bigg]f(x',y') \\ & = 2 \int_{|x'| > R}\! d x' \! \int\! d y'\,f(x',y')  -  \int_{|x'| > R}\! d x' \! \int_{|y'| > R}\! d y'\,f(x',y') \\ & = H_3' + H_3'' + H_3'''\,,
\end{split}
\end{equation*}
with 
\begin{equation*}
\begin{split}
H_3' & = 2 \int_{|x'| > R}\! d x' \! \int_{|y'| \le R-h}\! d y'\,f(x',y') \,, \\ H_3''&  = 2 \int_{|x'| > R}\! d x' \! \int_{|y'| > R-h}\! d y'\,f(x',y')\,, \\ H_3''' & = -  \int_{|x'| > R}\! d x' \! \int_{|y'| > R}\! d y'\,f(x',y')\,.
\end{split}
\end{equation*}
By the assumptions on $W_{R,h}$, we have $\nabla W_{R,h}(z) = \eta_h(|z|) z/|z|$ with $\eta_h(|z|) =0$ for $|z| \le R$. In particular, $\nabla W_{R,h}(y') = 0$ for $|y'| \le R-h$, hence
\[
H_3' =  \int_{|x'| > R}\! d x' \, \tilde\omega_\eps(x',t) \eta_h(|x'|) \,\frac{x'}{|x'|} \cdot  \int_{|y'| \le R-h}\! d y'\, K(x'-y') \, \tilde\omega_\eps(y',t)\,.
\]
In view of  \eqref{W2}, $|\eta_h(|z|)| \le C/h$, so that 
\begin{equation}
\label{a1'}
|H_3'| \le \frac{C}{h} m_\eps(R,t) \sup_{|x'| > R} |A_3(x')|\,,
\end{equation}
with
\[
A_3(x') = \frac{x'}{|x'|}\cdot  \int_{|y'| \le R-h}\! d y'\, K(x'-y') \, \tilde\omega_\eps(y',t) \,.
\]
Now, recalling \eqref{nucleoK} and using that $x'\cdot (x'-y')^\perp=-x'\cdot y'^\perp$, we get,
\begin{equation}
\label{in H_11}
A_3(x') = \frac{1}{2\pi} \int_{|y'|\leq R-h}\!  d y'\, \frac{x'\cdot y'^\perp}{|x'||x'-y'|^2}\,  \tilde\omega_\eps(y',t) \,.
\end{equation}
By \eqref{cen_vort}, $\int\! d y'\,  y'^\perp\,  \tilde\omega_\eps(y',t) = 0$, so that
\begin{equation}
\label{in H_13}
A_3(x')  = A_3'(x')-A_3''(x')\,, 
\end{equation}
where
\begin{eqnarray*}
	&& A_3'(x') = \frac{1}{2\pi}  \int_{|y'|\le R-h}\! d y'\, \frac {x'\cdot y'^\perp}{|x'|}\, \frac {y'\cdot (2x'-y')}{|x'-y'|^2 \ |x'|^2} \,  \tilde\omega_\eps(y',t) \,, \\ && A_3''(x')= \frac{1}{2\pi} \int_{|y'|> R-h}\! d y'\, \frac{x'\cdot y'^\perp}{|x'|^3}\,  \tilde\omega_\eps(y',t) \,.
\end{eqnarray*}
We notice that if $|x'| > R$ then $|y'| \le R-h$ implies $|x'-y'|\ge h$ and $|2x'-y'|\le |x'-y'|+|x'|$. Therefore, for any $|x'| > R \ge 2h$,
\[
\begin{split}
|A_3'(x')|& \le \frac{1}{2\pi}\bigg[\frac{1}{|x'|^2h} + \frac{1}{|x'|h^2} \bigg]  \int_{|y'|\leq R-h} \! d y'\, |y'|^2 \,  \tilde\omega_\eps(y',t) \\ & \le \frac{I_\eps(t)}{2\pi}\bigg[\frac{1}{R^2h} + \frac{1}{Rh^2}\bigg] \le \frac{3I_\eps(t)}{4\pi Rh^2}\,.
\end{split}
\]
To bound $A_3''(x')$, by Chebyshev's inequality, for any $|x'| > R \ge 2h$ we have,
\[
|A_3''(x)| \le \frac{1}{2\pi |x'|^2} \int_{|y'|> R-h}\! d y'\, |y'| \tilde\omega_\eps(y',t) \le \frac{I_\eps(t)}{2\pi R^2(R-h)} \le \frac{I_\eps(t)}{2 \pi R^2h} \,.
\]
From Eqs.~\eqref{a1'} and \eqref{in H_13}, the previous estimates, and $R\ge 2h$, we conclude that
\begin{equation}
\label{H_14b}
|H_3'| \le \frac{5C \, I_\eps(t)}{4\pi Rh^3} m_\eps(R,t)\,.
\end{equation}

Now, by \eqref{W3} and then applying the Chebyshev's inequality and again $R\ge 2h$,
\begin{equation*}
\begin{split}
|H_3''| + |H_3'''| & \le \frac{C}{\pi h^2} \int_{|x'| \ge R}\! d x' \! \int_{|y'| \ge R-h}\! d y'\,\tilde\omega_\eps(y',t) \,  \tilde\omega_\eps(x',t)  \\ & = \frac{C}{\pi h^2}m_\eps(R,t)   \int_{|y'| \ge R-h}\! d y'\, \tilde\omega_\eps(y',t)  \le \frac{4C \, I_\eps(t)}{\pi R^2h^2} m_\eps(R,t)\,.
\end{split}
\end{equation*}
In conclusion, recalling $R\ge 2h$, 
\begin{equation}
\label{a1s}
|H_3| \le  \frac{13 C \, I_\eps(t)}{4\pi R h^3} m_\eps(R,t)\,.
\end{equation}

Concerning $H_4$, we observe that by \eqref{W1} the integrand is different from zero only if $R\le |x-B_\eps(t)|\le R+h$. Therefore, by Lemma \ref{comesonoFeps} and \eqref{W2} we have,  using again the variables $x'=x-B_\eps(t)$, $y'=y-B_\eps(t)$,
\[
\begin{split}
|H_4| & \le \frac{C}{h} \int_{|x'|\ge R}\! d x'  \tilde\omega_\eps(x',t)  \int_{|y'|> R}\! d y'\, \tilde\omega_\eps(y',t) \\ &\quad + \frac{C }{h} \int_{R \le |x'|\le R+h}\! d x'  \tilde\omega_\eps(x',t) \int_{|y'| \le R}\! d y'\,|x'- y'| \,  \tilde\omega_\eps(y',t) \,.
\end{split}
\]
Since $|x'-y'| \le 2R+h$ in the domain of integration of the last integral and using the Chebyshev's inequality in the first one we get,
\begin{equation}
\label{a2s}
|H_4| \le \frac{C \, I_\eps(t)}{R^2h} m_\eps(R,t) +  C  \bigg(1+\frac{2R}h\bigg) m_\eps(R,t)\,.
\end{equation}

\bigskip

\bigskip

\bigskip

\bigskip
\noindent {\textbf{Acknowledgments}}.
Work performed under the auspices of GNFM-INDAM and the Italian Ministry
of the University (MIUR).

\medskip

\bigskip

\bigskip
\bigskip
\begin{minipage}[t]{10cm}
\begin{flushleft}
\small{
\textsc{Guido Cavallaro}
\\*Sapienza Universit\`a di Roma,
\\*Dipartimento di Matematica
\\*Piazzale Aldo Moro, 2
\\* Roma, 00185, Italia
\\*e-mail: cavallar@mat.uniroma1.it
\\[0.4cm]
\textsc{Carlo Marchioro}
\\*International Research Center M\&MOCS,
\\*Universit\`a di L'Aquila
\\*Palazzo Caetani
\\* Cisterna di Latina (LT), 04012, Italia
\\*e-mail: marchior@mat.uniroma1.it
}
\end{flushleft}
\end{minipage}


\end{document}